\documentclass[10pt, twocolumn]{article}

%% ===== Fonts (arXiv-friendly) =====
% arXiv's XeLaTeX environment may not expose system fonts via fontconfig.
% Use TeX-shipped fonts to avoid fontspec/font lookup failures.
\usepackage{lmodern}

%% ===== Core Packages =====
\usepackage{amsmath, amssymb, amsthm}
\usepackage{booktabs}
\usepackage{algorithm}
\usepackage{algpseudocode}
\usepackage{graphicx}
\usepackage{geometry}
\usepackage[hidelinks]{hyperref}
\usepackage{caption}
\usepackage{cite}
\usepackage{array}
\usepackage{adjustbox}
\usepackage{listings}
\usepackage{xcolor}
\usepackage{enumitem}
\usepackage{microtype}
\usepackage{url}
\usepackage{flafter}
\usepackage[section,above,below]{placeins}

% Tighter float/text spacing
\setlength{\textfloatsep}{6pt plus 2pt minus 2pt}
\setlength{\floatsep}{4pt plus 2pt minus 2pt}
\setlength{\intextsep}{6pt plus 2pt minus 2pt}
\setlength{\dbltextfloatsep}{6pt plus 2pt minus 2pt}
\setlength{\dblfloatsep}{4pt plus 2pt minus 2pt}
\setlength{\abovecaptionskip}{2pt}
\setlength{\belowcaptionskip}{0pt}

% Pack float-only pages to avoid large vertical gaps (esp. stacked table* in appendices)
\makeatletter
\setlength{\@fptop}{0pt}
\setlength{\@fpsep}{8pt}
\setlength{\@fpbot}{0pt}
\setlength{\@dblfptop}{0pt}
\setlength{\@dblfpsep}{8pt}
\setlength{\@dblfpbot}{0pt}
\makeatother

\setcounter{topnumber}{2}
\setcounter{bottomnumber}{2}
\setcounter{totalnumber}{4}
\setcounter{dbltopnumber}{2}

\graphicspath{{figs/}}

%% ===== Page Layout =====
\geometry{a4paper, margin=0.72in, columnsep=0.25in}

%% ===== Listings Configuration =====
\lstset{
  basicstyle=\ttfamily\scriptsize,
  breaklines=true,
  breakatwhitespace=false,
  columns=flexible,
  keepspaces=true,
  frame=single,
  backgroundcolor=\color{gray!5},
  xleftmargin=3pt,
  xrightmargin=3pt
}

%% ===== Theorem Environments =====
\newtheorem{theorem}{Theorem}
\newtheorem{lemma}{Lemma}
\newtheorem{proposition}{Proposition}

%% ===== Custom Commands =====
\newcommand{\algo}[1]{\textsc{#1}}
\DeclareRobustCommand{\nbhy}{\nobreakdash-\hspace{0pt}}
\newcommand{\LRH}{\algo{LRH}}
\newcommand{\HRW}{\algo{HRW}}
\newcommand{\MPCH}{\algo{MPCH}}
\newcommand{\CH}{\algo{CH}}
\newcommand{\BLCH}{\algo{CH\nbhy BL}}
\newcommand{\AnchorHash}{\algo{AnchorHash}}

\setlist{nosep, leftmargin=1.5em}

\pdfstringdefDisableCommands{%
  \def\nbhy{-}%
  \def\algo#1{#1}%
  \def\LRH{LRH}%
  \def\HRW{HRW}%
  \def\MPCH{MPCH}%
  \def\CH{CH}%
  \def\BLCH{CH-BL}%
  \def\AnchorHash{AnchorHash}%
}

%% ===== Title =====
\title{\textbf{Local Rendezvous Hashing:\\[0.3ex]
Bounded Loads and Minimal Churn\\
via Cache\nbhy Local Candidates}}

\author{
    Yongjie Guan \\
    Zhejiang University of Technology \\[0.6ex]
}
\date{\today}

\begin{document}
\maketitle

%% ===== Abstract =====
\begin{abstract}
Consistent hashing is fundamental to distributed systems. Classical ring\nbhy based schemes offer stability under updates but can exhibit a peak\nbhy to\nbhy average load ratio (PALR) of $\Theta(\ln n)$ in the single\nbhy token regime; achieving PALR of $1{+}\varepsilon$ requires $\Theta(\ln n/\varepsilon^2)$ virtual nodes per physical node in widely\nbhy used analyses \cite{appleton2015mpch}. Multi\nbhy probe consistent hashing (\MPCH) achieves near\nbhy perfect balance with $O(1/\varepsilon)$ probes \cite{appleton2015mpch}, but its probes typically translate into scattered memory accesses.

We introduce \textbf{Local Rendezvous Hashing (\LRH)}, which restricts Highest Random Weight (\HRW) selection \cite{thaler1998hrw,thaler1996rendezvous} to a contiguous window of $C$ \emph{distinct} physical neighbors on a ring. Precomputed \emph{next\nbhy distinct} offsets make candidate enumeration \emph{exactly $C$ ring steps} (distinct nodes), while the overall lookup remains $O(\log|\mathcal{R}|{+}C)$ due to the initial binary search. In a large\nbhy scale benchmark with $N{=}5000$ nodes, $V{=}256$ vnodes/node, $K{=}50$M keys, $C{=}8$ (20 threads), \LRH{} reduces PALR (Max/Avg) from $1.2785$ to $1.0947$ and empirically enforces \emph{ScanMax$=C$} under fixed\nbhy candidate enumeration. Under fixed\nbhy candidate liveness failover, \LRH{} achieves 0\% excess churn. Compared to \MPCH (8 probes), \LRH{} achieves ${\approx}6.8\times$ higher throughput (60.05 vs 8.80 M keys/s) while approaching \MPCH{}'s load balance. A microbenchmark further shows that speeding up \MPCH{} probe generation by $4.41\times$ improves assign-only throughput by only $1.06\times$, confirming that \MPCH{} remains dominated by $P \times$ lower\nbhy bound ring memory traffic rather than hash arithmetic.
\end{abstract}

%% ===== Introduction =====
\section{Introduction}

Partitioning keys uniformly across a changing set of nodes underpins sharded caches, distributed storage, and load balancers (e.g., Dynamo \cite{decandia2007dynamo}, Maglev \cite{eisenbud2016maglev}). An ideal assignment function must jointly satisfy:
\begin{enumerate}
    \item \textbf{Uniformity:} load should be balanced (PALR, P99/Avg).
    \item \textbf{Minimal churn:} only necessary keys move on changes.
    \item \textbf{Performance:} lookups must be fast and cache\nbhy friendly.
\end{enumerate}

In practice these objectives conflict. Ring consistent hashing \cite{karger1997consistent} is stable, but can be imbalanced without enough vnodes \cite{appleton2015mpch}. \MPCH{} improves balance using multiple probes \cite{appleton2015mpch}, but implies additional probe lookups and memory touches. Maglev \cite{eisenbud2016maglev} uses a lookup table for near\nbhy perfect balance yet explicitly tolerates a small number of table disruptions under backend changes. Recent consistent hashing systems such as \AnchorHash{} aim for high performance and strong consistency properties under arbitrary membership changes \cite{mendelson2021anchorhash}.

\paragraph{Our goal.}
We focus on a common setting in high\nbhy throughput data planes: \emph{fast per\nbhy key lookup with bounded work, cache locality, and minimal churn under liveness failures}. We propose \LRH: keep the ring topology but \emph{smooth local imbalance} by running a small \HRW{} election among $C$ cache\nbhy local distinct candidates.

\paragraph{Contributions.}
\begin{itemize}
    \item \textbf{\LRH{} algorithm:} \HRW{} election within a ring\nbhy local window of $C$ distinct nodes.
    \item \textbf{Next\nbhy distinct offsets:} enforce bounded distinct candidate enumeration in exactly $C$ ring steps.
    \item \textbf{Liveness failures with minimal churn:} fixed\nbhy candidate filtering yields 0\% excess churn under topology\nbhy fixed failures.
    \item \textbf{Weighted extension:} topology--weight decoupling via weighted \HRW{} as standardized in an IETF draft \cite{ietfWeightedHRW}.
    \item \textbf{Large\nbhy scale evaluation:} $N{=}5000$, $V{=}256$, $K{=}50$M, multiple failure sizes and repeats, with detailed churn and concentration metrics.
\end{itemize}

%% ===== Background =====
\section{Background and Problem Setting}

\subsection{Two kinds of ``changes'': membership vs liveness}

We distinguish two operational regimes:

\paragraph{Membership change.}
Nodes are added/removed from the configuration. Data structures may change (ring rebuild, table rebuild, etc.). This may induce churn beyond the theoretical minimum, depending on algorithm design \cite{eisenbud2016maglev,mendelson2021anchorhash}.

\paragraph{Liveness change.}
Nodes temporarily fail/recover, while the configured topology remains fixed (e.g., alive mask changes). Many systems strongly prefer minimal movement under such failures because failures may be frequent and transient.

Our strongest churn guarantee is for \emph{liveness changes} with fixed topology (Theorem~\ref{thm:churn}).

\subsection{Ring hashing, vnodes, and imbalance}

Ring consistent hashing maps nodes (or vnodes) to tokens on the unit circle and assigns keys to the next token clockwise \cite{karger1997consistent}. Analyses in \MPCH{} show that in the one\nbhy token setting the maximum gap is $\Theta(\ln n / n)$, yielding PALR $\Theta(\ln n)$; and to achieve PALR $\le 1{+}\varepsilon$ requires $\Theta(\ln n/\varepsilon^2)$ vnodes per node \cite{appleton2015mpch}. Large vnode state can stress memory and caches in high\nbhy throughput implementations.
Our vnode sweep (Section~\ref{sec:ring-vn-sweep}) empirically quantifies this trade-off: higher $V$ improves balance but substantially increases ring build cost and reduces lookup throughput.

\subsection{HRW (Rendezvous hashing)}

\HRW{} \cite{thaler1998hrw} assigns each key to the node maximizing a hash\nbhy derived score. It achieves strong balance but costs $O(N)$ work per key if performed over all nodes.

\subsection{MPCH and Maglev}

\MPCH{} \cite{appleton2015mpch} reduces imbalance using $K$ probes per key. In a typical implementation, each probe implies additional index lookups; performance can suffer when the probes translate into scattered memory accesses (a practical point also highlighted when contrasting ring/vnode tables vs other approaches) \cite{lamping2014jump}.

Maglev \cite{eisenbud2016maglev} builds a lookup table of size $M$ for fast dispatch, achieving near\nbhy perfect balance; it tolerates a small disruption rate under backend changes and discusses the trade\nbhy off between table size, disruption, and build cost.

\subsection{Where \LRH{} fits among modern alternatives}

Two influential directions:

\paragraph{\BLCH{} (bounded loads).}
\BLCH{} augments \CH{} with capacity constraints and forwarding to bound maximum load while controlling expected moves under updates \cite{mirrokni2018bounded}.

\paragraph{\AnchorHash{}.}
\AnchorHash{} achieves strong consistency and high performance under arbitrary membership changes without large vnode state \cite{mendelson2021anchorhash}. 
\LRH{}, in contrast, is \emph{ring-preserving}: it keeps an existing token ring and adds only a per-entry next-distinct offset plus a $C$-way \HRW{} election over ring-local distinct candidates. 
This targets deployments where failures are treated as liveness changes (fixed topology) and ring compatibility/cache-local bounded work are primary constraints.

\LRH{} is complementary: it keeps the ring topology (ease of integration) but introduces \emph{cache\nbhy local elections} to improve balance with fixed bounded work.

%% ===== Algorithm =====
\section{Local Rendezvous Hashing (\LRH)}

\subsection{Data Structure}

Let $\mathcal{N}=\{0,\dots,N{-}1\}$ be physical node ids. The ring $\mathcal{R}$ is a sorted array of $|\mathcal{R}|=N \cdot V$ entries:
\[
\mathcal{R}[i] = \bigl(\mathit{token}_i,\; \mathit{node}_i,\; \delta_i\bigr),
\]
sorted by $\mathit{token}_i$. The \textbf{next\nbhy distinct offset} $\delta_i$ satisfies:
\[
\mathcal{R}\bigl[(i{+}\delta_i)\bmod |\mathcal{R}|\bigr].\mathit{node} \ne \mathcal{R}[i].\mathit{node},
\]
and is the smallest such positive offset (wrapping around).

\subsection{Lookup}

Given key $k$ and candidate count $C$:

\begin{enumerate}
  \item Compute $h=\algo{HashPos}(k)$ and locate $idx=\algo{LowerBound}(\mathcal{R},h)$.
  \item Enumerate $C$ distinct candidates by walking $\delta$ offsets.
  \item Select the candidate with the maximum \HRW{} score (or weighted score).
\end{enumerate}

\begin{algorithm}[t]
\caption{\LRH{} Lookup}
\label{alg:lookup}
\begin{algorithmic}[1]
\Function{Lookup}{$k$, $C$}
    \State $h \gets \textsc{HashPos}(k)$
    \State $idx \gets \textsc{LowerBound}(\mathcal{R}, h)$
    \State $best \gets \texttt{null}$;\; $best\_s \gets -\infty$
    \For{$t \gets 1$ \textbf{to} $C$}
        \State $e \gets \mathcal{R}[idx]$
        \State $s \gets \textsc{HashScore}(k, e.\mathit{node})$
        \If{$s > best\_s$}
            \State $best\_s \gets s$;\; $best \gets e.\mathit{node}$
        \EndIf
        \State $idx \gets (idx + e.\delta) \bmod |\mathcal{R}|$
    \EndFor
    \State \Return $best$
\EndFunction
\end{algorithmic}
\end{algorithm}

\paragraph{Bounded distinct enumeration.}
The next\nbhy distinct offsets guarantee \emph{exactly $C$ ring steps} to enumerate $C$ distinct physical nodes, independent of vnode clustering. The remaining cost is the initial lower\nbhy bound search; hence total complexity is $O(\log|\mathcal{R}|{+}C)$.

Figure~\ref{fig:scanmax-c} shows that, under fixed-candidate enumeration, \LRH{} empirically enforces ScanMax$=C$ across failure sizes.

\begin{figure}[tb]
\centering
\includegraphics[width=\columnwidth]{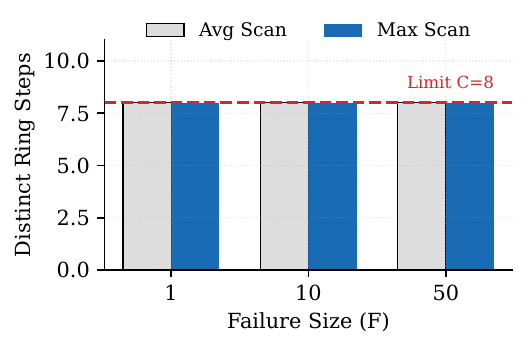}
\caption{Bounded distinct enumeration evidence: under fixed-candidate enumeration, \LRH{} empirically enforces ScanMax$=C$ (here $C{=}8$) across failure sizes.}
\label{fig:scanmax-c}
\end{figure}

\subsection{Building next\texorpdfstring{\nbhy}{-}distinct offsets}

Offsets are computed once per ring build (or rebuild). Let $R[i].node$ denote the node id.

\begin{algorithm}[t]
\caption{BuildNextDistinctOffsets}
\label{alg:builddelta}
\begin{algorithmic}[1]
\Function{BuildNextDistinctOffsets}{$R$}
    \State $m \gets |R|$
    \State $j \gets 1$
    \For{$i \gets 0$ \textbf{to} $m{-}1$}
        \If{$j \le i$} \State $j \gets i{+}1$ \EndIf
        \While{$R[j \bmod m].node = R[i].node$}
            \State $j \gets j{+}1$
        \EndWhile
        \State $R[i].\delta \gets (j{-}i)$
    \EndFor
\EndFunction
\end{algorithmic}
\end{algorithm}

This is $O(|R|)$ time and $O(1)$ extra memory.

\subsection{Weighted nodes (topology--weight decoupling)}

To support heterogeneous capacities without rebuilding the ring, we use weighted \HRW{} as standardized in an IETF draft \cite{ietfWeightedHRW} and commonly derived via exponential-race style constructions \cite{schindelhauer2005wdht,efraimidis2006wrs}. One equivalent form is:
\[
\arg\min_n \frac{-\ln(u_{k,n})}{w_n}, \quad u_{k,n} \sim \mathrm{Unif}(0,1].
\]
Weights $w_n$ reside in a separate array; updating weights is $O(1)$ and does not modify tokens.

\subsection{Failure handling: fixed\texorpdfstring{\nbhy}{-}candidate filtering}

We define two modes:

\paragraph{Fixed\texorpdfstring{\nbhy}{-}candidate (liveness failures).}
Compute the static candidate set $S_k$ (size $C$) from the fixed ring; among $S_k$ choose the highest\nbhy scoring \emph{alive} node. This confines failover within a predetermined set.

\paragraph{All candidates down (rare) fallback.}
If all $C$ candidates are down, extend the window by another block of $C$ candidates (repeat) until finding an alive node or hitting an implementation cap (e.g., max\nbhy scan). This preserves availability; it may increase work only in extreme failure scenarios.

%% ===== Analysis =====
\section{Analysis}

\subsection{Complexity and memory}

\paragraph{Time.}
Lookup is $O(\log|\mathcal{R}|{+}C)$: one lower\nbhy bound search on $|\mathcal{R}|=NV$ plus $C$ next\nbhy distinct steps.

\paragraph{Space.}
Ring storage is $O(NV)$ entries. With $N{=}5000$, $V{=}256$, $|\mathcal{R}|{=}1.28$M.

\subsection{Churn guarantee under liveness failures}

\begin{theorem}[Zero Excess Churn Under Fixed\texorpdfstring{\nbhy}{-}Candidate Liveness Failures]
\label{thm:churn}
Assume the ring topology is unchanged and only node liveness changes. Under \LRH{} fixed\nbhy candidate filtering, only keys whose original winner is now dead are remapped; therefore excess churn is zero.
\end{theorem}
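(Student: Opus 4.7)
The plan is to establish a pointwise invariance: for every key $k$ whose original winner survives the liveness change, \LRH{} re-elects the same winner. That immediately puts the set of remapped keys inside the set of keys whose home died, which is the minimum any assignment must move, so excess churn is zero.

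First I would exploit topology invariance. Because $\mathcal{R}$ (tokens, node ids, and all $\delta_i$) is unchanged, the initial index $idx_k=\textsc{LowerBound}(\mathcal{R},\textsc{HashPos}(k))$, and hence the ordered $C$-tuple of distinct candidates $S_k$ produced by Algorithm~\ref{alg:lookup}, depends only on $k$ and $\mathcal{R}$. The per-candidate scores $\sigma^k_n := \textsc{HashScore}(k,n)$ depend only on $k$ and node identity. So $S_k$, $\{\sigma^k_n\}_{n\in S_k}$, and any deterministic tiebreak are invariant across the failure event.

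Next I would apply monotone liveness. Let $A\supseteq A'$ be the alive sets before and after a failure-only change. Fixed-candidate filtering elects $w_k=\arg\max_{n\in S_k\cap A}\sigma^k_n$ before and $w'_k=\arg\max_{n\in S_k\cap A'}\sigma^k_n$ after. If $w_k\in A'$, then $w_k\in S_k\cap A'\subseteq S_k\cap A$ and $w_k$ already dominates the larger set in score, so it still dominates the subset; hence $w'_k=w_k$ and $k$ is not remapped. Contrapositively, any remapped key must have $w_k\notin A'$, i.e., its original winner is dead. The fallback branch (all of $S_k$ dead) is an edge case of the same statement: $S_k\cap A'=\emptyset$ forces $w_k\in S_k$ to be dead, so the remap is again charged to the necessary set.

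I do not expect a substantive obstacle. The only point to state carefully is that the candidate \emph{set} $S_k$, not merely the identity of the original winner, is topology-determined; this is exactly what the precomputed next-distinct offsets buy us, and it is what prevents the shrinking alive mask from ever surfacing a strictly better alternative capable of displacing a still-alive $w_k$.
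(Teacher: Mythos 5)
Your proof is correct and takes essentially the same route as the paper's sketch: both hinge on observing that $S_k$ and the per-candidate scores are deterministic functions of $(k,\mathcal{R})$, so a still-alive original winner continues to maximize the score over the shrunken alive subset of $S_k$, and hence only keys whose original winner died can be remapped. You additionally make the argmax-over-subset monotonicity explicit and dispose of the all-candidates-dead fallback branch, both of which tighten the sketch without changing its structure.
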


\begin{proof}[Proof sketch]
For any key $k$, the candidate set $S_k$ and the per\nbhy candidate scores are deterministic functions of $(k,\mathcal{R})$. If the original winner remains alive, it still has the maximal score among alive candidates in $S_k$, so the mapping does not change. Only keys whose original winner is dead fail over to the best alive candidate in the same $S_k$. Thus no healthy keys move, implying zero excess churn.
\end{proof}

\subsection{Smoothing Effect Analysis of \texorpdfstring{$C$}{C}-th Local Selection}
\label{sec:smoothing}

\paragraph{What we claim vs.\ what we do not claim.}
\LRH{} can be viewed as a \emph{local multi-choice} mechanism: each key compares $C$ \emph{distinct} ring-local candidates
and selects the maximum HRW score among them.
Classical ``power of two choices'' results assume (nearly) i.i.d.\ global samples; \LRH{} candidates are ring-local and hence correlated.
Therefore:
\begin{itemize}
  \item We \textbf{do not claim} a universal, assumption-free worst-case bound for all adversarial token layouts.
  \item We \textbf{do claim} a precise \emph{smoothing identity} under score symmetry, and a clean \emph{scaling law}
  under a standard randomized-ring regime.
\end{itemize}
In Appendix~D, we explicitly bound the secondary effects omitted by the simplified model, including the randomness of the candidate count $d_n$, discrete key sampling, and locality correlations.

\paragraph{Fluid model and notation.}
Let the ring contain $m=|\mathcal{R}|=NV$ tokens on $[0,1)$.
Let $G_1,\dots,G_m$ be consecutive cyclic gap lengths with $\sum_{i=1}^m G_i=1$.
All keys hashing into the same gap share the same successor index and therefore the same \LRH{} candidate set;
denote by $S_i\subseteq\mathcal{N}$ the $C$ distinct candidates associated with gap $i$.

\paragraph{Score symmetry.}
For a fixed key $k$, assume the per-node HRW scores $\{s(k,n)\}_{n\in S_i}$ are i.i.d.\ with a continuous distribution
(e.g., produced by a keyed hash). Ties occur with probability $0$.

\begin{lemma}[Uniform winner within a fixed candidate set]
\label{lem:uniform-winner-main}
Fix a key $k$ and a candidate set $S$ with $|S|=C$.
If $\{s(k,n)\}_{n\in S}$ are i.i.d.\ continuous, then for every $n\in S$,
\[
\Pr\!\bigl[\arg\max_{x\in S} s(k,x) = n\bigr] = \frac{1}{C}.
\]
\end{lemma}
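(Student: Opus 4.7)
The plan is to prove the lemma by a pure exchangeability/symmetry argument, which is the canonical route for ``argmax of i.i.d.\ continuous random variables'' statements. The continuity hypothesis is used only to rule out ties, after which every atom of probability lands on a unique winner and a finite-group symmetry pins each node's winning probability to $1/C$.

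First, I would dispose of the tie issue. Since $\{s(k,n)\}_{n\in S}$ are i.i.d.\ with a continuous distribution, for any two distinct $n,n'\in S$ the difference $s(k,n)-s(k,n')$ has a continuous distribution, hence $\Pr[s(k,n)=s(k,n')]=0$. A union bound over the $\binom{C}{2}$ pairs gives that, almost surely, the argmax is unique and the event $W_n := \{\arg\max_{x\in S} s(k,x) = n\}$ is well-defined for every $n\in S$. Second, I would invoke exchangeability: the joint law of the score vector $(s(k,n))_{n\in S}$ is invariant under any permutation $\pi$ of $S$ (because the coordinates are i.i.d.). Applying the permutation that swaps $n$ with $n'$ maps $W_n$ onto $W_{n'}$ up to a null set, so $\Pr[W_n]=\Pr[W_{n'}]$ for all $n,n'\in S$. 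Third, since the events $\{W_n : n\in S\}$ are a.s.\ disjoint and their union has probability $1$ (some coordinate attains the max), additivity gives $C\cdot\Pr[W_n]=1$, i.e.\ $\Pr[W_n]=1/C$.

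There is no real obstacle here; the lemma is essentially a restatement of the fact that i.i.d.\ continuous samples form an exchangeable sequence whose rank vector is uniform on the symmetric group $\mathfrak{S}_C$. The only technical point worth double-checking is that continuity of each marginal indeed implies $\Pr[s(k,n)=s(k,n')]=0$; this follows from Fubini applied to the product measure, and does \emph{not} require the common distribution to be absolutely continuous, only atomless. I would state the lemma and this one-line tie argument explicitly, then cite exchangeability for the symmetry step, keeping the proof to a short paragraph consistent with its role as a building block for the scaling-law analysis that follows in Section~\ref{sec:smoothing}.
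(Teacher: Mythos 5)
Your proof is correct and takes the same exchangeability route as the paper, which states it in a single sentence; your version merely spells out the tie-ruling-out step and the additivity step that the paper leaves implicit. No discrepancy.
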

\begin{proof}
By exchangeability under relabeling, each of the $C$ candidates is equally likely to be the unique maximizer.
\end{proof}

\paragraph{A smoothing identity (structural).}
Let $L_n$ be node $n$'s \emph{fluid} load share (fraction of the unit circle mapped to $n$).
Lemma~\ref{lem:uniform-winner-main} implies each gap contributes its mass evenly to all $C$ candidates in its set:
\begin{equation}
\label{eq:lrh-load-linear-main}
L_n \;=\; \frac{1}{C}\sum_{i:\, n\in S_i} G_i.
\end{equation}
Thus, relative to ring CH (where each gap maps entirely to one successor), \LRH{} replaces ``winner-takes-all''
by a \emph{local averaging operator} over $C$ candidates.

\paragraph{Scaling law.}
Under a randomized-ring regime (i.i.d.\ uniform token positions), \eqref{eq:lrh-load-linear-main} yields the familiar variance scale
$\mathrm{SD}(L_n)\approx 1/(N\sqrt{VC})$ and the max-deviation scale
$\mathrm{PALR}=1+O(\sqrt{\ln N/(VC)})$.
This confirms the $\sqrt{C}$ smoothing gain observed in our evaluations.

\paragraph{Note on baseline fairness: $V$ vs.\ $VC$ (cost matters).}
The randomized-ring scaling suggests that \LRH{}'s structural smoothing
$\mathrm{SD}(L_n)\propto 1/\sqrt{VC}$ matches ring CH if one simply increases vnodes to $V'=VC$.
While statistically equivalent, the distinction is cost structure: ring CH pays $C\times$ state blow-up (ring size $NV\to NVC$),
worsening cache/TLB locality and build overhead,
whereas \LRH{} keeps the ring size fixed at $NV$ and pays an extra $O(C)$ \emph{ring-local} work per lookup.
Our vnode sweep (Section~\ref{sec:ring-vn-sweep}) empirically illustrates this trade-off.

\subsection{Availability Analysis of \texorpdfstring{$p^C$}{p\textasciicircum C}}
\label{sec:availability-main}

\paragraph{What we claim vs.\ what we do not claim.}
Fixed-candidate liveness handling yields zero excess churn (Theorem~\ref{thm:churn}) but implies that availability depends on the joint failure probability of the candidate set.
We presents models for independent failures and fixed-fraction failures.
We \textbf{do not claim} intrinsic immunity to topology-correlated failures (e.g., rack failure) if tokens are placed naively; however, Appendix~D.7 shows that with standard random hashing (which decorrelates logical position from physical ID), the probability of hitting a single point of failure decays exponentially in $C$.

\begin{theorem}[Independent liveness model]
\label{thm:pc-main}
Assume each node is independently down with probability $p$.
For any key $k$ whose \LRH{} candidate set $S_k$ contains $C$ distinct nodes,
\emph{let $A_k$ be the event that all $C$ candidates in $S_k$ are down.} Then
\[
\Pr[A_k] = p^C,
\qquad
\Pr[\overline{A_k}] = 1-p^C.
\]
\end{theorem}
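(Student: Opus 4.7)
The plan is to reduce the statement to a direct application of independence, with the only non-trivial observation being that the next-distinct offset construction guarantees the $C$ entries of $S_k$ are pairwise distinct physical nodes (so we are multiplying $C$ distinct per-node failure probabilities, not fewer). First I would fix the key $k$ and write $S_k = \{n_1, \dots, n_C\} \subseteq \mathcal{N}$, noting that by the definition of $\delta_i$ in Section 3.1 and the bounded distinct enumeration argument already invoked for Algorithm~\ref{alg:lookup}, the enumeration produces $C$ pairwise distinct physical node ids. Crucially, $S_k$ is a deterministic function of $(k,\mathcal{R})$ and does not depend on the liveness random variables.

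Next I would introduce the per-node indicator events $D_j = \{\text{node } n_j \text{ is down}\}$, so that $A_k = \bigcap_{j=1}^{C} D_j$. By the independent liveness hypothesis, $\{D_j\}_{j=1}^{C}$ are mutually independent and each satisfies $\Pr[D_j] = p$. Since the $n_j$ are distinct, these $C$ indicators correspond to $C$ different coordinates of the liveness vector, so independence applies without collapsing. Then
\begin{equation*}
\Pr[A_k] \;=\; \Pr\!\Bigl[\bigcap_{j=1}^{C} D_j\Bigr] \;=\; \prod_{j=1}^{C} \Pr[D_j] \;=\; p^{C},
\end{equation*}
and $\Pr[\overline{A_k}] = 1-p^{C}$ follows immediately from complementation.

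Strictly speaking there is no hard part; the entire argument is a one-line product rule. The only place where something could go wrong is the distinctness of $S_k$: if candidates could coincide (as they can in naive windowed ring walks that count repeated vnodes of the same physical node), the correct bound would involve $p^{|S_k|}$ with $|S_k| \le C$, which is strictly larger than $p^C$. I would therefore emphasize in the write-up that Algorithm~\ref{alg:builddelta} is precisely what rules this case out, so the exponent $C$ in the statement is tight. I would close by remarking that the complementary bound $1-p^C$ gives the per-key availability under fixed-candidate filtering, with the fallback of Section~3.5 only invoked on the $p^C$-probability event $A_k$.
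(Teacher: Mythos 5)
Your proof is correct and follows the same approach as the paper's one-line argument: distinctness of the $C$ candidates plus independence of per-node failures gives $\Pr[A_k]=p^C$ directly by the product rule. Your added commentary on why distinctness matters (and how the next-distinct offsets guarantee it) is a useful elaboration, but it is the same underlying argument, not a different route.
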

\begin{proof}
Distinctness gives $C$ independent down events; multiply probabilities.
\end{proof}

\begin{theorem}[Fixed-$F$ failure model (hypergeometric)]
\label{thm:hypergeo-main}
Suppose exactly $F$ out of $N$ nodes are down, and $S_k$ behaves as a uniformly random $C$-subset of nodes.
Then
\[
\Pr[S_k \subseteq \text{Failed}] = \frac{\binom{F}{C}}{\binom{N}{C}}
\;\le\;
\left(\frac{F}{N}\right)^C.
\]
\end{theorem}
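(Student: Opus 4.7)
The plan is to handle the equality and the inequality as two separate, elementary steps, since each rests on a different combinatorial observation. For the equality, I would invoke the hypothesis that $S_k$ is uniformly distributed over the $\binom{N}{C}$ subsets of $\mathcal{N}$ of size $C$. The event $\{S_k \subseteq \text{Failed}\}$ is exactly the event that $S_k$ is one of the $\binom{F}{C}$ subsets contained in the $F$-element failed set, so the probability is $\binom{F}{C}/\binom{N}{C}$ by direct counting. (If $F < C$ the numerator is zero and the claim is vacuous; otherwise the denominator is nonzero since $C \le N$.)

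Next, to establish the upper bound $\binom{F}{C}/\binom{N}{C} \le (F/N)^C$, I would rewrite the ratio in its ``falling factorial'' form
\[
\frac{\binom{F}{C}}{\binom{N}{C}} \;=\; \prod_{i=0}^{C-1}\frac{F-i}{N-i},
\]
and then show each factor is at most $F/N$. The factorwise bound $(F-i)/(N-i) \le F/N$ reduces to the algebraic identity $(F-i)N - F(N-i) = i(F-N) \le 0$, which holds whenever $F \le N$ and $i \ge 0$. Multiplying the $C$ factors gives the claimed exponential bound.

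As an alternative justification that I would mention briefly, one can couple the fixed-$F$ model to an i.i.d. sampling model: drawing $C$ nodes \emph{without} replacement is stochastically dominated (for the event ``all sampled nodes are failed'') by drawing $C$ nodes \emph{with} replacement from the same population, whose probability is exactly $(F/N)^C$. This gives the bound without explicit algebra and also clarifies the relationship to Theorem~\ref{thm:pc-main}: the independent-liveness bound $p^C$ with $p = F/N$ is the correct large-$N$ limit of the hypergeometric expression.

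Honestly, I expect no real obstacle here; the result is a textbook hypergeometric calculation. The only care required is in stating the domain of validity (handling the degenerate cases $F < C$ or $F = N$ without vacuous division) and in being explicit that the hypothesis ``$S_k$ behaves as a uniformly random $C$-subset'' is an \emph{assumption} on the token layout rather than an identity proved elsewhere; under adversarial or topology-correlated placements this uniformity can fail, which is exactly the caveat flagged in the discussion preceding the theorem.
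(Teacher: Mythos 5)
Your proof is correct and follows the same approach as the paper: a direct counting argument for the equality, and the falling-factorial expansion $\prod_{j=0}^{C-1}(F-j)/(N-j)$ with a factorwise comparison to $F/N$ for the inequality. The extra remarks you add (degenerate cases, the with-replacement coupling as an alternative justification, and the caveat that uniform-subset behavior is an assumption) are sound and sensible but go beyond the paper's terse proof sketch.
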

\begin{proof}
Counting argument for the exact ratio; the inequality follows from
\[
\frac{\binom{F}{C}}{\binom{N}{C}}
=\prod_{j=0}^{C-1}\frac{F-j}{N-j}
\le (F/N)^C
\].
\end{proof}

\paragraph{Fallback cost.}
If the implementation extends the window by additional $C$-blocks until success,
under independent failures the number of blocks is geometric:
\[
\begin{aligned}
\mathbb{E}[\text{\#blocks}] &= \frac{1}{1-p^C},\\
\mathbb{E}[\text{\#candidates examined}] &= \frac{C}{1-p^C}.
\end{aligned}
\]

%% ===== Implementation Notes =====
\section{Implementation Notes}

\paragraph{Hashing.}
Two hashes are used: \algo{HashPos} determines ring position and \algo{HashScore} determines \HRW{} score. In adversarial environments, these should be keyed (e.g., SipHash) \cite{aumasson2012siphash} to prevent crafted keys from skewing balance.

\paragraph{Modes in evaluation.}
We evaluate several update/failure semantics:

\begin{itemize}
  \item \textbf{rebuild}: rebuild the structure after failures/membership change.
  \item \textbf{next\nbhy alive} (ring/mpch variants): keep topology and scan to the next alive node during lookup.
  \item \textbf{fixed\nbhy candidate} (\LRH): compute a fixed candidate set and select the best alive within it.
\end{itemize}

These semantics matter: algorithms that rebuild can induce additional remapping (excess churn), while fixed\nbhy topology strategies can achieve zero excess churn for liveness changes (Theorem~\ref{thm:churn}).

%% ===== Evaluation =====
\section{Evaluation}
\label{sec:eval}

\subsection{Experimental setup}

\paragraph{Hardware/OS.}
Windows~11, Intel Ultra~7 265KF, 64\,GB DDR5-6400.

\paragraph{Core parameters.}
$N{=}5000$ nodes, $V{=}256$ vnodes/node ($|\mathcal{R}|{=}1.28$M), $K{=}50{,}000{,}000$ keys, \LRH{} candidates $C{=}8$, Maglev table size $M{=}65537$ (prime), failure sizes $\{1,10,50\}$, 5 repeats per failure size, 1 warmup.

\paragraph{Parallelism.}
All experiments use Rayon with 20 worker threads (reported by the benchmark) \cite{rayon}.
We report aggregate throughput across threads.

\paragraph{Workload generation.}
Keys are generated via a seeded PRNG (base seed 20251226; repeats use different derived seeds). The large $K$ reduces sampling noise for distribution metrics.

\paragraph{Fairness and comparability.}
All schemes are evaluated under a shared harness: identical key generation, identical failure sets, and a unified metric implementation for moved/excess/affected/recv. We make evaluation semantics explicit in the results (\texttt{[rebuild]}, \texttt{[next-alive]}, \texttt{[fixed-cand]}), since failure-handling policy is part of the systems contract and directly affects churn. We also remove non-algorithmic overheads from timed regions where possible (e.g., allocating an \texttt{alive\_all} vector outside the timed query loop) to avoid ``implementation constant'' bias.

\paragraph{Implementation scope and threats to validity.}
Our implementations are reference-level and prioritize semantic fidelity over peak optimization (e.g., limited use of prefetching, vectorization, or layout-specific tuning). Consequently, throughput results should be interpreted as prototype-level evidence and may vary under optimized production implementations, although the observed gaps are consistent with the algorithms' differing memory-access patterns (sequential ring steps vs.\ scattered probes). Additionally, \HRW{} uses key sampling for large $N$ due to its $O(N)$ per-key cost, so its throughput is reported separately and is not directly comparable to full-key runs.

\subsection{Baselines}

We compare:
Ring \cite{karger1997consistent},
\MPCH \cite{appleton2015mpch},
Maglev \cite{eisenbud2016maglev},
Jump \cite{lamping2014jump},
\HRW \cite{thaler1998hrw},
and CRUSH \cite{weil2006crush}.
Finally, we implement a \emph{CRUSH-like} two-level rack model to serve as a structural baseline.

\paragraph{Note on Jump semantics.}
Jump Hash requires contiguous bucket IDs \cite{lamping2014jump}. Our benchmark models a rebuild\nbhy by\nbhy renumber semantics under membership changes, which is not Jump’s intended sweet spot; we report results for transparency and to illustrate semantic sensitivity.

\subsection{Metrics (exact definitions)}

Let $K_{\text{used}}$ be the number of keys actually evaluated in a row (e.g., sampled for full \HRW).

\begin{itemize}
  \item \textbf{Churn\%} $= \frac{\text{moved}}{K_{\text{used}}}\times 100\%$, where moved counts keys with $\texttt{init\_idx} \ne \texttt{fail\_idx}$.
  \item \textbf{Excess\%}: churn beyond the theoretical minimum for the given failure size (as computed by the benchmark harness).
  \item \textbf{FailAffected}: number of keys whose \emph{initial} assignment fell into the failed node set (keys that truly require failover), not equal to moved.
  \item \textbf{MaxRecvShare} $= \max_i \frac{\text{recv}[i]}{\text{affected}}$, where \texttt{recv[i]} counts only \emph{affected keys} remapped to alive node $i$.
  \item \textbf{Conc($\times$)} $= \frac{\text{MaxRecvShare}}{1/N_{\text{alive}}}$, i.e., the busiest receiver's share relative to ideal equal split among alive nodes.
  \item \textbf{ScanAvg}: average scan steps (each step checks the next token/candidate until hitting an alive node), computed as \texttt{scan\_sum/denom}.
  \item \textbf{ScanMax}: maximum observed scan steps in any single mapping (init or fail).
\end{itemize}

\subsection{Overall results (all failure sizes)}

Table~\ref{tab:overallA} reports the overall average across failure sizes (15 runs total).
Figure~\ref{fig:tradeoff-overall} visualizes the throughput/balance trade-off across the evaluated methods.

\begin{table*}[t]
\centering
\caption{Overall Average Across Failure Sizes (N=5000, V=256, K=50M; 15 runs total). HRW is reported on sampled keys; modes [rebuild]/[next-alive]/[fixed-cand] indicate different failure-handling semantics. Throughput is aggregate across 20 Rayon threads.}
\label{tab:overallA}
\scriptsize
\setlength{\tabcolsep}{3.6pt}
\begin{adjustbox}{max width=\textwidth}
\begin{tabular}{@{}l r r r r r r r r r r r r r@{}}
\toprule
\textbf{Algorithm} &
\textbf{K} &
\textbf{Build} &
\textbf{Query} &
\textbf{Thrpt} &
\textbf{Max/Avg} &
\textbf{P99/Avg} &
\textbf{CV} &
\textbf{Churn} &
\textbf{Excess} &
\textbf{FailAff} &
\textbf{MaxRecv} &
\textbf{Conc} &
\textbf{ScanAvg/Max} \\
& used & (ms) & (ms) & (M/s) & & & & (\%) & (\%) & & share & ($\times$) & \\
\midrule
Ring(vn=256)[rebuild] & 50M & 60.98 & 1303.94 & 38.36 & 1.2785 & 1.1550 & 0.0639 & 0.408 & 0.000 & 204044 & 0.0099 & 49.33 & 0.00/0 \\
Ring(vn=256)[next-alive] & 50M & 0.00 & 725.34 & 68.95 & 1.2785 & 1.1550 & 0.0639 & 0.408 & 0.000 & 204044 & 0.0099 & 49.33 & 1.00/3 \\
\MPCH{}(ring,vn=256,P=8)[next-alive] & 50M & 28.69 & 5680.67 & 8.80 & 1.0697 & 1.0439 & 0.0192 & 0.407 & 0.000 & 203587 & 0.0020 & 10.08 & 8.02/11 \\
\textbf{\LRH{}(vn=256,C=8)[fixed-cand]} & \textbf{50M} & \textbf{30.58} & \textbf{832.98} & \textbf{60.05} & \textbf{1.0947} & \textbf{1.0574} & \textbf{0.0244} & \textbf{0.407} & \textbf{0.000} & \textbf{203416} & \textbf{0.0012} & \textbf{6.14} & \textbf{8.00/8} \\
\LRH{}(vn=256,C=8)[rebuild] & 50M & 62.68 & 1509.80 & 33.16 & 1.0947 & 1.0574 & 0.0244 & 0.719 & 0.312 & 203416 & 0.0012 & 5.89 & 8.00/8 \\
Jump[rebuild-buckets] & 50M & 0.00 & 202.47 & 248.09 & 1.0361 & 1.0232 & 0.0100 & 66.214 & 65.808 & 203140 & 0.3740 & 1868.90 & 0.00/0 \\
	Maglev(M=65537)[rebuild] & 50M & 2.98 & 38.51 & 1301.76 & 1.1000 & 1.0818 & 0.0257 & 1.971 & 1.565 & 202824 & 0.0334 & 166.72 & 0.00/0 \\
	\HRW{}(sample K=2M) & 2M & 0.01 & 1723.36 & 1.16 & 1.1810 & 1.1185 & 0.0501 & 0.405 & 0.000 & 8106 & 0.0026 & 12.93 & 0.00/0 \\
	CRUSH-like(rack=50,bp=8,lp=8,tries=16) & 50M & 0.00 & 604.37 & 82.79 & 1.0379 & 1.0233 & 0.0100 & 0.406 & 0.000 & 203143 & 0.0005 & 2.58 & 16.03/57 \\
	\bottomrule
	\end{tabular}
	\end{adjustbox}
\end{table*}

\begin{figure*}[t]
\centering
\includegraphics[width=\textwidth]{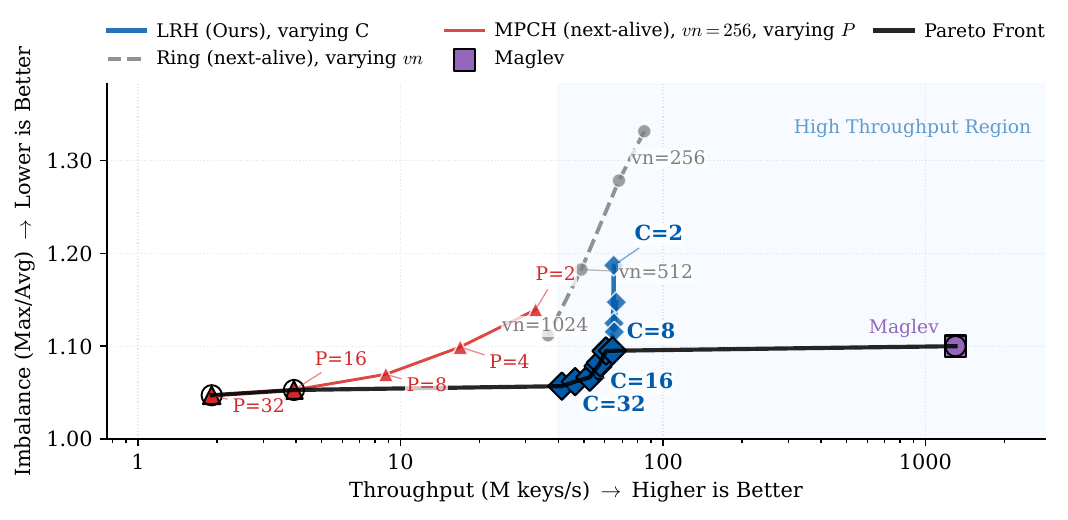}
\caption{Throughput--imbalance trade-off (all failure sizes): throughput vs.\ balance (PALR Max/Avg). We compare Ring (next-alive) varying virtual nodes per physical node $vn$ (gray dashed) against \LRH{} varying the local candidate count $C$ (blue). We additionally plot \MPCH{} on the same ring size ($vn{=}256$) while varying probe count $P\in\{2,4,8,16,32\}$ (red triangles), showing \MPCH{}'s own balance--throughput trade-off under identical failure-handling semantics.}
\label{fig:tradeoff-overall}
\end{figure*}

\paragraph{Uniformity.}
Ring(vn=256) has Max/Avg=1.2785. \LRH{} reduces this to 1.0947 using only $C=8$ distinct candidates, while \MPCH{} (P=8) reaches 1.0697. Thus \LRH{} approaches \MPCH{}'s balance while keeping candidate enumeration local and strictly bounded.

\paragraph{Throughput and cache locality.}
\LRH{} achieves 60.05 M keys/s, only about 13\% below Ring(next-alive) at 68.95 M keys/s, indicating that the additional $C$ candidate checks are largely cache-local. In contrast, \MPCH{} shows a steep probe-driven trade-off: at $vn{=}256$, increasing probes from $P{=}2$ to $P{=}32$ improves imbalance from 1.1393 to 1.0471 but reduces throughput from 32.74 to 1.91 M keys/s. At comparable imbalance to \LRH{} (e.g., \MPCH{} $P{=}4$ at 1.0991), \MPCH{} is still markedly slower (16.90 vs.\ 60.05 M keys/s), consistent with probe-driven scattered ring accesses.

\paragraph{\MPCH{} sensitivity to ring size.}
Appendix Figure~\ref{fig:mpch-vn-sweep} fixes $P{=}8$ and varies $vn\in\{1,16,256\}$: \MPCH{} operates without virtual nodes, but larger rings improve balance at a substantial throughput cost, consistent with ring footprint effects in addition to the dominant multi-probe lookup cost.

\paragraph{Churn, excess churn, and semantic comparability.}
Under fixed\nbhy candidate liveness failover, \LRH{} achieves 0\% excess churn (Theorem~\ref{thm:churn}) and empirically enforces ScanMax$=C$ (8), matching the algorithm’s ``exactly $C$ distinct steps'' semantics. Rebuild-based methods change candidate sets or lookup tables and therefore may incur excess churn (e.g., Maglev rebuild), consistent with Maglev’s stated disruption trade-offs \cite{eisenbud2016maglev}.

\FloatBarrier

\subsection{Microbenchmark: \MPCH{} probe generation vs.\ assignment}
\label{sec:mpch-micro}

A common concern is whether \MPCH{} appears slower due to an unfairly expensive probe-generation implementation. To isolate this, we add a report that measures (i) probe generation only and (ii) assign-only (full mapping) while switching from per-probe mix64 hashing to standard double-hashing.
For $N{=}5000$, $V{=}256$ ($|\mathcal{R}|{=}1.28$M), $P{=}8$, and 2M sampled keys, lower-bound search costs about $\lceil\log_2 |\mathcal{R}|\rceil \approx 21$ steps per probe, i.e., about $P \times 21 \approx 168$ random ring loads per key, or roughly $168 \times 16$B $\approx$ 2.62 KiB of ring-entry traffic per key.

\begin{table}[t]
\centering
\small
\caption{\MPCH{} probe-generation microbenchmark (2M keys; identical across cases). Double-hashing speeds up probe generation by $4.41\times$ but improves assign-only throughput by only $1.06\times$, indicating assignment is dominated by $P \times$ lower-bound ring traffic.}
\label{tab:mpch-probe-gen}
\setlength{\tabcolsep}{5pt}
\begin{tabular}{@{}l r@{}}
\toprule
\textbf{Case} & \textbf{Mkeys/s} \\
\midrule
Assign-only (mix64 probes) & 0.81 \\
Assign-only (double-hash probes) & 0.86 \\
Probe-gen only (mix64 probes) & 75.60 \\
Probe-gen only (double-hash probes) & 333.49 \\
\bottomrule
\end{tabular}
\end{table}

Figure~\ref{fig:mpch-micro} summarizes the microbenchmark breakdown for \MPCH{}.

\begin{figure}[tb]
\centering
\includegraphics[width=\columnwidth]{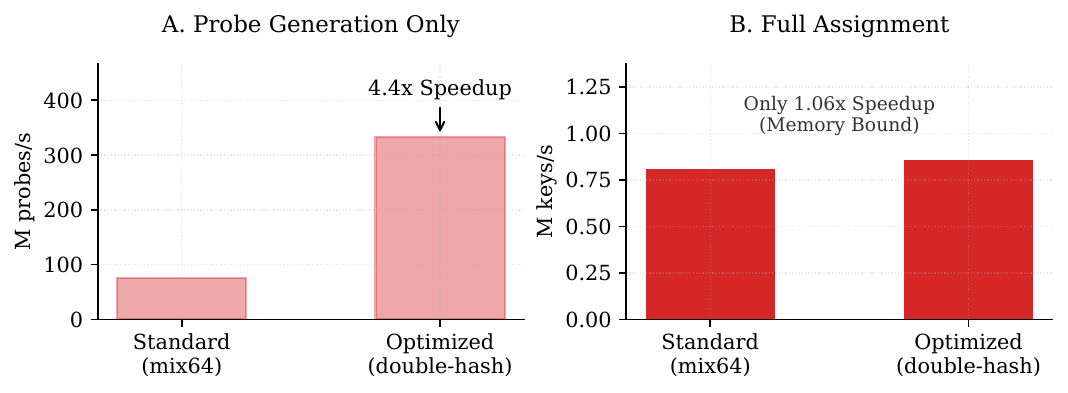}
\caption{\MPCH{} microbenchmark breakdown: probe generation can be sped up substantially, but assign-only throughput barely changes, indicating lower-bound ring traffic dominates.}
\label{fig:mpch-micro}
\end{figure}

This result strengthens the interpretation that \MPCH{} is fundamentally constrained by the microarchitectural cost of repeated lower-bound searches---scattered ring accesses with data-dependent branches and translation/cache pressure---not by hash arithmetic constants (Section~\ref{sec:vtune}).

\FloatBarrier

\subsection{Microarchitectural attribution with VTune}
\label{sec:vtune}

To connect the access-pattern difference to the throughput gap between \MPCH{} and \LRH{}, we collect Intel VTune hardware-event counters \cite{intelvtune,intelvtune2025guide} for the same large-scale topology ($N{=}5000$, $V{=}256$) and parameters ($P{=}8$, $C{=}8$), using $K{=}5$M keys to keep profiling overhead manageable.
Because the two methods have very different runtimes, we report \emph{normalized rates} (per branch, per cycle, or per retired instruction) rather than raw counts.

\begin{table}[tb]
\centering
\scriptsize
\caption{Selected VTune-derived rates from the same benchmark configuration (normalized to be comparable across different runtimes).}
\label{tab:vtune-norm}
\setlength{\tabcolsep}{3.5pt}
\begin{tabular}{@{}p{0.48\columnwidth} c r r r@{}}
\toprule
\textbf{Metric} & \textbf{Core} & \textbf{\MPCH{}} & \textbf{\LRH{}} & \textbf{Ratio} \\
\midrule
Branch mispredict rate ($\frac{\texttt{BR\_MISP}}{\texttt{BR\_INST}}$) & P & 0.161 & 0.072 & 2.22 \\
Branch mispredict rate ($\frac{\texttt{BR\_MISP}}{\texttt{BR\_INST}}$) & E & 0.158 & 0.091 & 1.74 \\
DTLB walk-active cycles / cycles ($\frac{\texttt{DTLB\_WALK\_ACTIVE}}{\texttt{CLK}}$) & P & 0.134 & 0.099 & 1.36 \\
L1D miss loads / inst retired ($\frac{\texttt{L1D\_MISS.LOAD}}{\texttt{INST}}$) & P & 0.075 & 0.046 & 1.63 \\
Offcore data-read cycles / cycles ($\frac{\texttt{OFFCORE\_DATA\_RD}}{\texttt{CLK}}$) & P & 0.656 & 0.598 & 1.10 \\
TopDown Bad Speculation share (E-core, $\frac{\texttt{BAD\_SPEC}}{\texttt{sum}}$) & E & 0.583 & 0.350 & 1.67 \\
\bottomrule
\end{tabular}
\end{table}

\begin{figure*}[t]
\centering
\includegraphics[width=\textwidth]{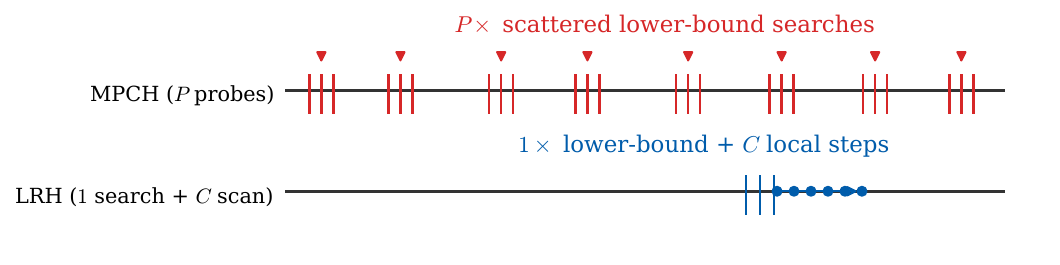}
\caption{Schematic ring-access traces: \MPCH{} performs $P$ scattered lower-bound searches; \LRH{} performs one lower-bound search plus $C$ local ring steps.}
\label{fig:access-schematic}
\end{figure*}

\begin{figure}[tb]
\centering
\includegraphics[width=\columnwidth]{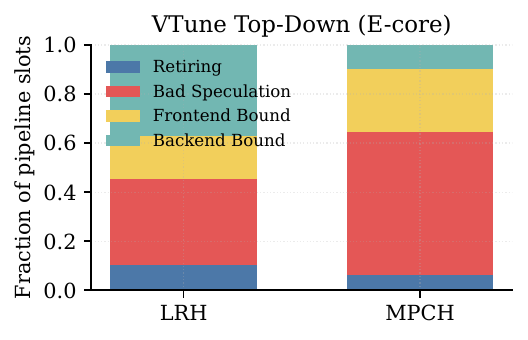}
\caption{VTune Top-Down breakdown on E-cores (normalized by pipeline slots).}
\label{fig:vtune-topdown}
\end{figure}

\paragraph{Unpredictable control flow from repeated binary searches.}
\MPCH{} performs $P$ independent lower-bound searches per key; each search is a data-dependent binary search over a large ring array and therefore induces unpredictable branches.
Consistent with this structure, Table~\ref{tab:vtune-norm} shows substantially higher branch misprediction rates for \MPCH{} than \LRH{} (2.22$\times$ on P-cores; 1.74$\times$ on E-cores).

\paragraph{Translation and cache pressure from scattered probes.}
Binary search touchpoints are scattered across the ring, so repeated searches amplify TLB walks and cache-miss exposure.
In our VTune runs, \MPCH{} exhibits higher DTLB walk-active cycles (1.36$\times$) and higher L1D miss-load density (1.63$\times$ per retired instruction) on P-cores, with a modest increase in offcore data-read cycles (1.10$\times$).
These indicators align with the microbenchmark result in Section~\ref{sec:mpch-micro}: even large improvements to probe generation do not change the dominant cost in the assignment phase.

\paragraph{Why Top-Down shares can be counterintuitive.}
In VTune's Top-Down taxonomy, work executed on wrong-path speculation may still generate cache/TLB pressure, yet the resulting stall time is attributed to \emph{Bad Speculation} rather than \emph{Memory Bound}.
Accordingly, Figure~\ref{fig:vtune-topdown} shows that \MPCH{} shifts a larger fraction of pipeline slots into Bad Speculation on E-cores, which can make a simple ``memory-bound share'' argument incomplete for multi-probe binary-search workloads.

\paragraph{Measurement scope.}
The VTune logs underlying Table~\ref{tab:vtune-norm} and Figure~\ref{fig:vtune-topdown} are collected at the process level for the benchmark configuration and include repeated runs; we therefore emphasize normalized rates that are insensitive to wall-time differences.
For per-lookup attribution within only the timed query loop, VTune supports scoping counters via ITT task markers; we leave this refinement to future artifact updates.

\subsection{Ablation: candidate count $C$}
\label{sec:ablation-c}

We vary the number of distinct candidates $C$ under all-alive conditions. Larger $C$ improves balance but reduces throughput, matching the intended trade-off.

\begin{table}[tb]
\centering
\small
\caption{\LRH{} ablation over $C$ (all-alive). Increasing $C$ improves Max/Avg but decreases throughput.}
\label{tab:ablation-c}
\setlength{\tabcolsep}{5pt}
\begin{tabular}{@{}r r r@{}}
\toprule
\textbf{$C$} & \textbf{Max/Avg} & \textbf{Thrpt (M/s)} \\
\midrule
2  & 1.1871 & 64.95 \\
4  & 1.1248 & 65.09 \\
8  & 1.0947 & 60.68 \\
16 & 1.0679 & 52.51 \\
32 & 1.0569 & 41.20 \\
\bottomrule
\end{tabular}
\end{table}

\begin{figure}[tb]
\centering
\includegraphics[width=\columnwidth]{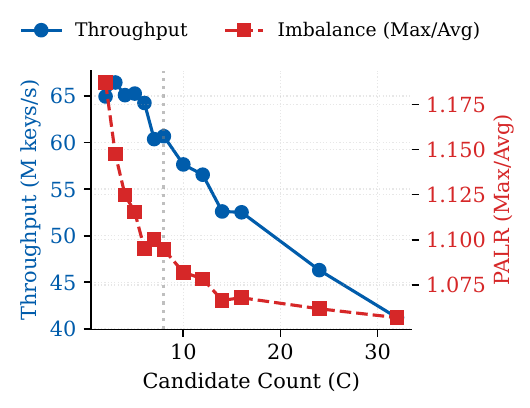}
\caption{\LRH{} ablation: increasing candidate count $C$ improves balance (Max/Avg) at the cost of throughput.}
\label{fig:ablation-c}
\end{figure}

\subsection{Ablation: ring vnode count $V$}
\label{sec:ring-vn-sweep}

To quantify how far classical ring hashing can be pushed by increasing vnodes alone, we sweep the number of virtual nodes per physical node $V$ for Ring under the same large-scale setup ($N{=}5000$, $K{=}50$M, averaged across failure sizes) and the same failure-handling policy (\texttt{[next-alive]}).
Figure~\ref{fig:ring-vn-sweep} shows that increasing $V$ substantially improves balance, but exhibits clear diminishing returns: Max/Avg decreases from 2.6914 ($V{=}8$) to 1.1118 ($V{=}1024$), while beyond $V{\ge}128$ each doubling improves Max/Avg by only about 4--8\% (e.g., 1.3316$\to$1.2785$\to$1.1826$\to$1.1118).
At the same time, throughput drops sharply: 131.62 M keys/s ($V{=}8$) $\to$ 67.98 M keys/s ($V{=}256$) $\to$ 36.50 M keys/s ($V{=}1024$).
We report the sweep\nbhy run throughput here; the overall suite run in Table~\ref{tab:overallA} reports 68.95 M keys/s at $V{=}256$ (within 1.4\%).
This supports a cache/locality interpretation: the ring array grows as $|\mathcal{R}|{=}N\!\cdot\!V$ (a $4\times$ increase from $V{=}256$ to $V{=}1024$), making binary searches less cache-friendly and reducing effective memory locality even under \texttt{next-alive}.

Build cost also scales with ring size as expected for ring construction: Ring(\texttt{[rebuild]}) build time grows from 1.55\,ms ($V{=}8$) to 303.09\,ms ($V{=}1024$, ${\approx}195\times$).
Meanwhile, churn remains essentially unchanged across $V$ (about 0.40\% in our runs), consistent with the fact that the fraction of keys that must move is primarily determined by the failure fraction rather than token granularity; vnodes mainly influence \emph{how evenly} affected keys are redistributed.

Finally, this sweep strengthens the comparison against simply ``turning the vnode knob'': achieving Ring Max/Avg $\approx 1.11$ requires $V{=}1024$ and yields 36.50 M keys/s, whereas \LRH{} at $V{=}256$ achieves better balance (Max/Avg 1.0947) at higher throughput (60.05 M keys/s, ${\approx}1.65\times$).

\begin{figure}[tb]
\centering
\includegraphics[width=\columnwidth]{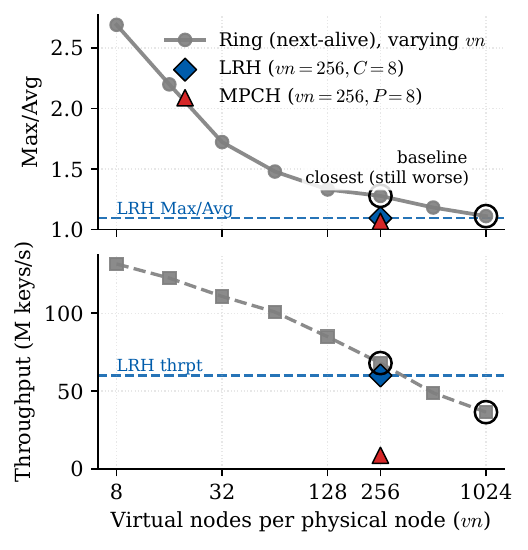}
\caption{Ring vnode sweep (next-alive): increasing vnodes improves balance with diminishing returns but reduces throughput; overlay points show \LRH{} and \MPCH{} at $V{=}256$ for reference.}
\label{fig:ring-vn-sweep}
\end{figure}

\subsection{Churn by failure size}

\begin{table}[tb]
\centering
\small
\caption{Churn and Excess Churn by Failure Size (K=50M).}
\label{tab:churn}
\setlength{\tabcolsep}{5pt}
\begin{tabular}{@{}l c c c@{}}
\toprule
\textbf{Algorithm} & \textbf{F=1} & \textbf{F=10} & \textbf{F=50} \\
\midrule
\multicolumn{4}{@{}l}{\textit{Churn\%}} \\
Ring [next-alive]           & 0.020 & 0.201 & 1.004 \\
\textbf{\LRH{} [fixed-cand]} & \textbf{0.020} & \textbf{0.200} & \textbf{1.000} \\
\addlinespace
\multicolumn{4}{@{}l}{\textit{Excess\%}} \\
Ring [next-alive]            & 0.000 & 0.000 & 0.000 \\
\textbf{\LRH{} [fixed-cand]} & \textbf{0.000} & \textbf{0.000} & \textbf{0.000} \\
\LRH{} [rebuild]             & 0.015 & 0.155 & 0.765 \\
Maglev [rebuild]             & 0.145 & 1.037 & 3.513 \\
Jump [rebuild-renum]         & 16.136 & 83.632 & 97.657 \\
\bottomrule
\end{tabular}
\end{table}

\subsection{Discussion: what the numbers say}

\paragraph{Why fixed\nbhy candidate works.}
Fixed\nbhy candidate filtering isolates liveness failures from topology changes, enabling Theorem~\ref{thm:churn}. Rebuild changes the candidate sets (or lookup tables), creating additional remapping (excess churn).

\paragraph{Concentration of failover load.}
\LRH{} reduces concentration compared to ring next\nbhy alive: overall Conc($\times$) is 6.14 vs 49.33. This means the busiest receiver of affected keys is closer to an ideal equal split among alive nodes.

Figure~\ref{fig:conc-by-f} shows how concentration evolves with failure size and highlights \LRH{}'s reduced receiver hot-spotting.

\begin{figure}[tb]
\centering
\includegraphics[width=\columnwidth]{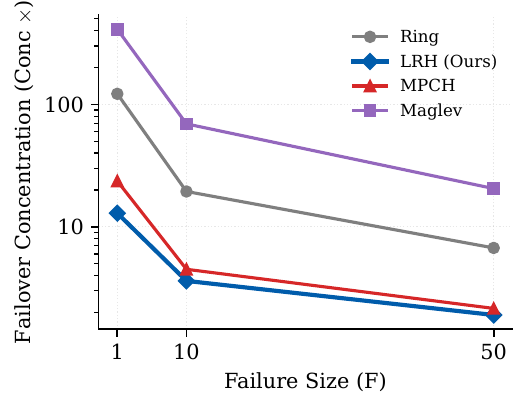}
\caption{Failover concentration (Conc$\times$) vs.\ failure size $F$ (log scale). \LRH{} consistently reduces receiver hot-spotting compared to ring next-alive.}
\label{fig:conc-by-f}
\end{figure}

\FloatBarrier

\paragraph{Interpretation of Jump.}
Under rebuild\nbhy by\nbhy renumber semantics, Jump shows extreme churn. This reflects a semantic mismatch: Jump is designed for sequential bucket IDs and is not intended for arbitrary deletion/renumber membership semantics \cite{lamping2014jump}.

\subsection{Membership changes: add/remove 1\% of nodes}
\label{sec:membership}

Our strongest guarantee targets liveness changes under fixed topology. For completeness, we also measure membership changes that rebuild data structures. We add/remove 1\% of nodes ($N{=}5000 \rightarrow 5050$ and $N{=}5000 \rightarrow 4950$).

\begin{table}[tb]
\centering
\small
\caption{Membership change (rebuild semantics): churn and excess churn under $\pm$1\% node changes.}
\label{tab:membership-1pct}
\setlength{\tabcolsep}{5pt}
\begin{tabular}{@{}l r r@{}}
\toprule
\textbf{Algorithm} & \textbf{Churn\%} & \textbf{Excess\%} \\
\midrule
\multicolumn{3}{@{}l}{\textit{+1.00\% nodes (5000 $\rightarrow$ 5050)}} \\
\LRH{}(vn=256,C=8) & 1.750 & 0.760 \\
Ring(vn=256) & 0.992 & 0.000 \\
Maglev(M=65537) & 4.247 & 3.331 \\
\addlinespace
\multicolumn{3}{@{}l}{\textit{-1.00\% nodes (5000 $\rightarrow$ 4950)}} \\
\LRH{}(vn=256,C=8) & 1.766 & 0.765 \\
Ring(vn=256) & 1.004 & 0.000 \\
Maglev(M=65537) & 4.511 & 3.513 \\
\bottomrule
\end{tabular}
\end{table}

Figure~\ref{fig:membership-1pct} visualizes churn/excess churn under $\pm$1\% membership changes.

\begin{figure}[tb]
\centering
\includegraphics[width=\columnwidth]{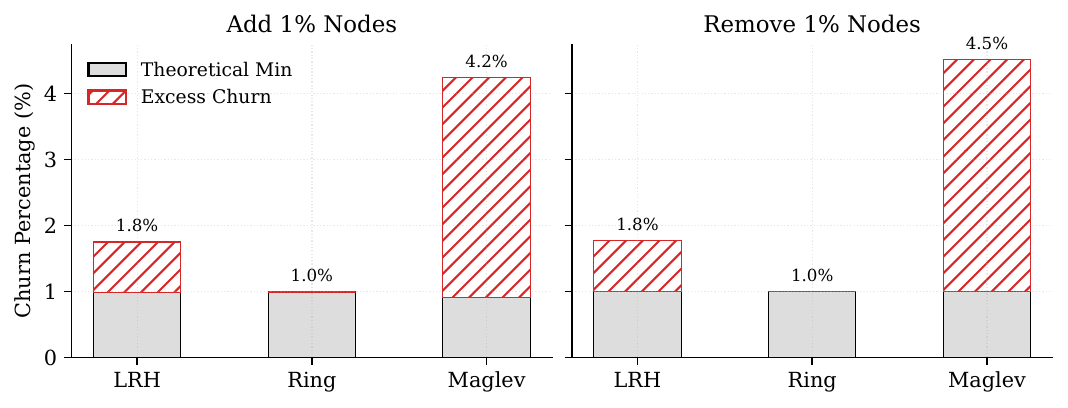}
\caption{Membership change (rebuild semantics): churn and excess churn under $\pm$1\% node changes.}
\label{fig:membership-1pct}
\end{figure}

These results highlight a semantic distinction: fixed-topology liveness handling can achieve zero excess churn (Theorem~\ref{thm:churn}), whereas membership changes that rebuild candidate sets or tables may introduce excess churn depending on the algorithm.

%% ===== Limitations and Future Work =====
\section{Limitations and Future Work}

\paragraph{Binary search locality.}
Although candidate enumeration is strictly $C$ ring steps, the lower\nbhy bound search costs $O(\log|\mathcal{R}|)$ accesses. Future work can explore cache\nbhy friendly layouts (e.g., Eytzinger layout) \cite{khuong2017arraylayouts,khuong2015arraylayouts} or coarse indexing to reduce this cost.

\paragraph{Membership changes.}
Our strongest churn guarantee is for liveness failures under fixed topology. Membership changes alter the ring and may induce additional churn. We report $\pm$1\% membership changes in Section~\ref{sec:membership}; broader sweeps (larger percentages and more workloads) remain important for a full comparison with \AnchorHash{} \cite{mendelson2021anchorhash} and table\nbhy based approaches \cite{eisenbud2016maglev}.
More broadly, \AnchorHash{} is explicitly optimized for \emph{arbitrary membership} updates; our membership experiments use a ring rebuild semantics and should not be read as a like\nbhy for\nbhy like evaluation of \AnchorHash{}'s strengths in that regime.

\paragraph{Weighted accuracy.}
Weighted \HRW{} is standard \cite{ietfWeightedHRW}. However, in \LRH{} weights interact with the candidate generation mechanism. A dedicated weighted evaluation (e.g., Zipf weights, bimodal capacities) is planned to quantify allocation error vs $C$.

\paragraph{Implementation optimization.}
Our implementations prioritize semantic fidelity over peak optimization. Throughput results may vary under optimized production implementations; future work includes more aggressive optimizations and comparisons with highly tuned baselines.

%% ===== Related Work =====
\section{Related Work}

\paragraph{Consistent hashing and vnodes.}
Consistent hashing originates with Karger et al. \cite{karger1997consistent}. \MPCH{} provides a widely cited analysis of imbalance and vnode cost \cite{appleton2015mpch}. Jump discusses constraints (sequential buckets) and contrasts memory behavior with ring/vnode tables \cite{lamping2014jump}.

\paragraph{Rendezvous hashing and weights.}
\HRW{} (Rendezvous hashing) is described by Thaler and Ravishankar \cite{thaler1998hrw}. Weighted variants are standardized in an IETF draft \cite{ietfWeightedHRW}.

\paragraph{Bounded loads.}
\BLCH{} bounds maximum load under updates using forwarding and provides rigorous expected move bounds with a tunable parameter $c=1+\epsilon$ \cite{mirrokni2018bounded}.

\paragraph{Modern consistent hashing systems.}
\AnchorHash{} provides a scalable consistent hash with strong consistency under arbitrary membership changes \cite{mendelson2021anchorhash}. \LRH{} targets a different trade\nbhy off: bounded candidate work and cache\nbhy local lookup paths with explicit liveness failure semantics. In ring\nbhy centric deployments, this can translate into a smaller implementation delta: \LRH{} reuses the existing ring layout and changes only the \emph{choice rule} (a $C$\nobreakdash-way local \HRW{} election), whereas adopting \AnchorHash{} may require replacing ring\nbhy specific data structures and operational conventions.

\paragraph{Table\nbhy based balancers.}
Maglev \cite{eisenbud2016maglev} achieves near\nbhy perfect balance via a lookup table and characterizes disruption under backend changes.

\paragraph{Multiple choice intuition.}
Classical ``power of two choices'' results explain why comparing a small set of candidates reduces imbalance under suitable sampling assumptions \cite{mitzenmacher2001twochoices}. \LRH{} differs in that candidates are ring\nbhy local and engineered for cache locality.

\paragraph{Topology-aware placement.}
CRUSH \cite{weil2006crush} is a decentralized hierarchy\nbhy aware placement algorithm; our CRUSH\nbhy like baseline models a rack hierarchy.

%% ===== Conclusion =====
\section{Conclusion}

\LRH{} reconciles a common systems tension: ring\nbhy based schemes are stable but can be imbalanced without large vnode state, while multi\nbhy probe/table approaches improve balance at the cost of scattered probes or disruption under rebuild. \LRH{} restricts \HRW{} to a cache\nbhy local window of $C$ distinct neighbors and uses next\nbhy distinct offsets to bound enumeration to exactly $C$ ring steps. In a large\nbhy scale benchmark ($N{=}5000$, $V{=}256$, $K{=}50$M, $C{=}8$), \LRH{} reduces Max/Avg from 1.2785 to 1.0947 while preserving 0\% excess churn under fixed\nbhy candidate liveness failover, and empirically enforces ScanMax$=C$ via next\nbhy distinct offsets. \LRH{} achieves 60.05 M keys/s, only modestly below Ring(next-alive) at 68.95 M keys/s, and provides ${\approx}6.8\times$ the throughput of \MPCH{} (P=8) in our implementation while approaching \MPCH{}'s load balance.

\bibliographystyle{plain}
\bibliography{refs}

@inproceedings{karger1997consistent,
  title     = {Consistent Hashing and Random Trees: Distributed Caching Protocols for Relieving Hot Spots on the World Wide Web},
  author    = {Karger, David and Lehman, Eric and Leighton, Tom and Levine, Matthew and Lewin, Daniel and Panigrahy, Rina},
  booktitle = {Proceedings of the 29th Annual ACM Symposium on Theory of Computing (STOC '97)},
  pages     = {654--663},
  year      = {1997},
  doi       = {10.1145/258533.258660}
}

@article{thaler1998hrw,
  title   = {Using Name-Based Mappings to Increase Hit Rates},
  author  = {Thaler, David G. and Ravishankar, Chinya V.},
  journal = {IEEE/ACM Transactions on Networking},
  volume  = {6},
  number  = {1},
  pages   = {1--14},
  year    = {1998},
  doi     = {10.1109/90.663936}
}

@misc{appleton2015mpch,
  title        = {Multi-probe Consistent Hashing},
  author       = {Appleton, Ben and O'Reilly, Michael},
  howpublished = {arXiv:1505.00062},
  year         = {2015},
  doi          = {10.48550/arXiv.1505.00062},
  url          = {https://arxiv.org/abs/1505.00062}
}

@inproceedings{eisenbud2016maglev,
  title     = {Maglev: A Fast and Reliable Software Network Load Balancer},
  author    = {Eisenbud, Danielle E. and Yi, Cheng and Contavalli, Carlo and Smith, Cody and Kononov, Roman and Mann-Hielscher, Eric and Cilingiroglu, Ardas and Cheyney, Bin and Shang, Wentao and Hosein, Jinnah Dylan},
  booktitle = {13th USENIX Symposium on Networked Systems Design and Implementation (NSDI 16)},
  pages     = {523--535},
  year      = {2016},
  url       = {https://www.usenix.org/conference/nsdi16/technical-sessions/presentation/eisenbud}
}

@misc{lamping2014jump,
  title        = {A Fast, Minimal Memory, Consistent Hash Algorithm},
  author       = {Lamping, John and Veach, Eric},
  howpublished = {arXiv:1406.2294},
  year         = {2014},
  doi          = {10.48550/arXiv.1406.2294},
  url          = {https://arxiv.org/abs/1406.2294}
}

@inproceedings{mirrokni2018bounded,
  title     = {Consistent Hashing with Bounded Loads},
  author    = {Mirrokni, Vahab S. and Thorup, Mikkel and Zadimoghaddam, Morteza},
  booktitle = {Proceedings of the Twenty-Ninth Annual ACM-SIAM Symposium on Discrete Algorithms (SODA 2018)},
  pages     = {587--604},
  year      = {2018},
  doi       = {10.1137/1.9781611975031.39},
  url       = {https://epubs.siam.org/doi/10.1137/1.9781611975031.39}
}

@article{mitzenmacher2001twochoices,
  title   = {The Power of Two Choices in Randomized Load Balancing},
  author  = {Mitzenmacher, Michael},
  journal = {IEEE Transactions on Parallel and Distributed Systems},
  volume  = {12},
  number  = {10},
  pages   = {1094--1104},
  year    = {2001},
  doi     = {10.1109/71.963420}
}

@inproceedings{weil2006crush,
  title     = {{CRUSH}: Controlled, Scalable, Decentralized Placement of Replicated Data},
  author    = {Weil, Sage A. and Brandt, Scott A. and Miller, Ethan L. and Maltzahn, Carlos},
  booktitle = {Proceedings of the 2006 ACM/IEEE Conference on Supercomputing (SC '06)},
  year      = {2006},
  doi       = {10.1145/1188455.1188582}
}

@inproceedings{decandia2007dynamo,
  title     = {Dynamo: Amazon's Highly Available Key-value Store},
  author    = {DeCandia, Giuseppe and Hastorun, Deniz and Jampani, Madan and Kakulapati, Gunavardhan and Lakshman, Avinash and Pilchin, Alex and Sivasubramanian, Swaminathan and Vosshall, Peter and Vogels, Werner},
  booktitle = {Proceedings of the 21st ACM SIGOPS Symposium on Operating Systems Principles (SOSP '07)},
  pages     = {205--220},
  year      = {2007},
  doi       = {10.1145/1294261.1294281}
}

@article{mendelson2021anchorhash,
  title   = {{AnchorHash}: A Scalable Consistent Hash},
  author  = {Mendelson, Gal and Vargaftik, Shay and Barabash, Katherine and Lorenz, Dean H. and Keslassy, Isaac and Orda, Ariel},
  journal = {IEEE/ACM Transactions on Networking},
  volume  = {29},
  number  = {2},
  pages   = {517--528},
  year    = {2021},
  doi     = {10.1109/TNET.2020.3039547}
}

@misc{ietfWeightedHRW,
  title        = {Weighted {HRW} and its Applications},
  howpublished = {Internet-Draft, {IETF}. \url{https://datatracker.ietf.org/doc/html/draft-ietf-bess-weighted-hrw-02}},
  year         = {2025},
  month        = jul,
  note         = {draft-ietf-bess-weighted-hrw-02 (work in progress). Accessed 2025-12-28}
}

@inproceedings{aumasson2012siphash,
  title     = {{SipHash}: A Fast Short-Input {PRF}},
  author    = {Aumasson, Jean-Philippe and Bernstein, Daniel J.},
  booktitle = {Progress in Cryptology -- {INDOCRYPT} 2012},
  series    = {Lecture Notes in Computer Science},
  volume    = {7668},
  pages     = {489--508},
  year      = {2012},
  doi       = {10.1007/978-3-642-34931-7_28}
}

@misc{khuong2015arraylayouts,
  title        = {Array Layouts for Comparison-Based Searching},
  author       = {Khuong, Paul-Virak and Morin, Pat},
  howpublished = {arXiv:1509.05053},
  year         = {2015},
  url          = {https://arxiv.org/abs/1509.05053}
}

@article{khuong2017arraylayouts,
  title   = {Array Layouts for Comparison-Based Searching},
  author  = {Khuong, Paul-Virak and Morin, Pat},
  journal = {ACM Journal of Experimental Algorithmics},
  volume  = {22},
  year    = {2017},
  doi     = {10.1145/3053370},
  url     = {https://doi.org/10.1145/3053370}
}

@misc{rayon,
  title        = {{Rayon}: Data Parallelism in Rust},
  author       = {{Rayon Contributors}},
  howpublished = {\url{https://docs.rs/crate/rayon/latest}},
  note         = {Accessed 2025-12-28}
}

@misc{intelvtune,
  title        = {Intel {VTune} Profiler},
  howpublished = {\url{https://www.intel.com/content/www/us/en/developer/tools/oneapi/vtune-profiler.html}},
  note         = {Accessed 2025-12-28}
}

@misc{intelvtune2025guide,
  title        = {Get Started with Intel{®} VTune{™} Profiler (2025.0)},
  howpublished = {Intel official PDF guide. \url{https://cdrdv2-public.intel.com/835465/vtune-profiler_get-started-guide_2025.0-769038-835465.pdf}},
  year         = {2025},
  note         = {Accessed 2025-12-28}
}

@techreport{thaler1996rendezvous,
  title       = {A Name-Based Mapping Scheme for Rendezvous},
  author      = {Thaler, David G. and Ravishankar, Chinya V.},
  institution = {University of Michigan},
  number      = {CSE-TR-316-96},
  year        = {1996},
  note        = {\url{https://web.eecs.umich.edu/~ravishan/papers/mapping.ps}. Accessed 2025-12-28}
}

@inproceedings{schindelhauer2005wdht,
  title     = {Weighted Distributed Hash Tables},
  author    = {Schindelhauer, Christian and Schomaker, Gunnar},
  booktitle = {Proceedings of the 17th ACM Symposium on Parallelism in Algorithms and Architectures (SPAA 2005)},
  pages     = {218--227},
  year      = {2005},
  doi       = {10.1145/1073970.1074008},
  url       = {https://doi.org/10.1145/1073970.1074008}
}

@article{efraimidis2006wrs,
  title   = {Weighted random sampling with a reservoir},
  author  = {Efraimidis, Pavlos S. and Spirakis, Paul G.},
  journal = {Information Processing Letters},
  volume  = {97},
  number  = {5},
  pages   = {181--185},
  year    = {2006},
  doi     = {10.1016/j.ipl.2005.11.003},
  url     = {https://doi.org/10.1016/j.ipl.2005.11.003}
}

%% ===== Appendix =====
\appendix

\section*{Appendix A: Reproduction Parameters}

\begin{lstlisting}
--nodes 5000 --keys 50000000 --vnodes 256
--candidates 8 --maglev-m 65537
--fail-list 1,10,50 --repeats 5 --warmup 1
--seed 20251226 --threads 0
--hrw-full-max-n 2000 --hrw-sample-keys 2000000
--max-scan 4096 --mp-probes 8
--crush-rack-size 50 --crush-bucket-probes 8
--crush-leaf-probes 8 --crush-tries 16
--report-memory --report-mpch-probe-gen --report-membership
--membership-pct 1.0 --ablation-c-list 2,4,8
\end{lstlisting}

\section*{Appendix B: \MPCH{} Sensitivity to Ring Size}

\vspace{-0.6em}
\begin{figure}[H]
\centering
\includegraphics[width=\columnwidth]{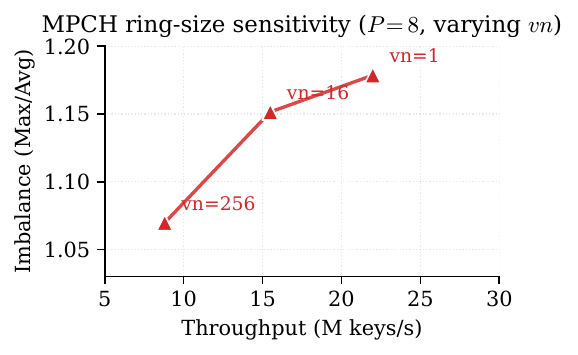}
\caption{\MPCH{} sensitivity to ring size (vn). We fix $P{=}8$ and vary $vn\in\{1,16,256\}$, isolating ring-size effects from the multi-probe mechanism.}
\label{fig:mpch-vn-sweep}
\end{figure}

\FloatBarrier
\onecolumn

\section*{Appendix C: Per-Failure Detailed Tables}

\begin{center}
\scriptsize
\setlength{\tabcolsep}{3.2pt}
\captionof{table}{Detailed Results (failed\_nodes=1), averaged over 5 repeats (K=50M).}
\label{tab:fail1}
\begin{adjustbox}{max width=\textwidth}
\begin{tabular}{@{}l r r r r r r r r r r r r r r@{}}
\toprule
Algorithm & K & Build & Query & Thrpt & Max/Avg & P99/Avg & CV & Churn\% & Excess\% & FailAff & MaxRecv & Conc & ScanAvg & ScanMax \\
\midrule
Ring(vn=256)[rebuild] & 50M & 60.57 & 1304.71 & 38.33 & 1.2785 & 1.1550 & 0.0639 & 0.020 & 0.000 & 9948 & 0.0244 & 121.86 & 0.00 & 0 \\
Ring(vn=256)[next-alive] & 50M & 0.00 & 717.00 & 69.74 & 1.2785 & 1.1550 & 0.0639 & 0.020 & 0.000 & 9948 & 0.0244 & 121.86 & 1.00 & 2 \\
MPCH(ring,vn=256,P=8)[next-alive] & 50M & 30.20 & 5660.70 & 8.84 & 1.0697 & 1.0439 & 0.0192 & 0.020 & 0.000 & 10075 & 0.0047 & 23.61 & 8.00 & 10 \\
\textbf{LRH(vn=256,C=8)[fixed-cand]} & \textbf{50M} & \textbf{30.01} & \textbf{819.54} & \textbf{61.02} & \textbf{1.0947} & \textbf{1.0574} & \textbf{0.0244} & \textbf{0.020} & \textbf{0.000} & \textbf{9927} & \textbf{0.0026} & \textbf{12.90} & \textbf{8.00} & \textbf{8} \\
LRH(vn=256,C=8)[rebuild] & 50M & 64.53 & 1511.57 & 33.12 & 1.0947 & 1.0574 & 0.0244 & 0.035 & 0.015 & 9927 & 0.0025 & 12.40 & 8.00 & 8 \\
Jump[rebuild-buckets] & 50M & 0.00 & 210.29 & 238.02 & 1.0361 & 1.0232 & 0.0100 & 16.156 & 16.136 & 10035 & 1.0000 & 4999.00 & 0.00 & 0 \\
	Maglev(M=65537)[rebuild] & 50M & 3.12 & 37.39 & 1339.74 & 1.1000 & 1.0818 & 0.0257 & 0.165 & 0.145 & 9825 & 0.0821 & 410.42 & 0.00 & 0 \\
	HRW(sample K=2M) & 2M & 0.01 & 1742.41 & 1.15 & 1.1810 & 1.1185 & 0.0501 & 0.020 & 0.000 & 395 & 0.0056 & 28.02 & 0.00 & 0 \\
		CRUSH-like(rack=50,bp=8,lp=8,tries=16) & 50M & 0.00 & 601.80 & 83.17 & 1.0379 & 1.0233 & 0.0100 & 0.020 & 0.000 & 9994 & 0.0009 & 4.40 & 16.00 & 45 \\
			\bottomrule
			\end{tabular}
			\end{adjustbox}
\end{center}

\par\vspace{10pt}
\begin{center}
\scriptsize
\setlength{\tabcolsep}{3.2pt}
\captionof{table}{Detailed Results (failed\_nodes=10), averaged over 5 repeats (K=50M).}
\label{tab:fail10}
\begin{adjustbox}{max width=\textwidth}
\begin{tabular}{@{}l r r r r r r r r r r r r r r@{}}
\toprule
Algorithm & K & Build & Query & Thrpt & Max/Avg & P99/Avg & CV & Churn\% & Excess\% & FailAff & MaxRecv & Conc & ScanAvg & ScanMax \\
\midrule
Ring(vn=256)[rebuild] & 50M & 60.03 & 1310.51 & 38.18 & 1.2785 & 1.1550 & 0.0639 & 0.201 & 0.000 & 100424 & 0.0039 & 19.41 & 0.00 & 0 \\
Ring(vn=256)[next-alive] & 50M & 0.00 & 723.17 & 69.14 & 1.2785 & 1.1550 & 0.0639 & 0.201 & 0.000 & 100424 & 0.0039 & 19.41 & 1.00 & 3 \\
MPCH(ring,vn=256,P=8)[next-alive] & 50M & 28.34 & 5713.26 & 8.75 & 1.0697 & 1.0439 & 0.0192 & 0.201 & 0.000 & 100316 & 0.0009 & 4.49 & 8.01 & 11 \\
\textbf{LRH(vn=256,C=8)[fixed-cand]} & \textbf{50M} & \textbf{31.66} & \textbf{833.66} & \textbf{59.98} & \textbf{1.0947} & \textbf{1.0574} & \textbf{0.0244} & \textbf{0.200} & \textbf{0.000} & \textbf{100123} & \textbf{0.0007} & \textbf{3.61} & \textbf{8.00} & \textbf{8} \\
LRH(vn=256,C=8)[rebuild] & 50M & 58.39 & 1509.31 & 33.16 & 1.0947 & 1.0574 & 0.0244 & 0.355 & 0.155 & 100123 & 0.0007 & 3.36 & 8.00 & 8 \\
Jump[rebuild-buckets] & 50M & 0.00 & 205.89 & 244.34 & 1.0361 & 1.0232 & 0.0100 & 83.831 & 83.632 & 99780 & 0.1015 & 506.39 & 0.00 & 0 \\
	Maglev(M=65537)[rebuild] & 50M & 2.88 & 38.37 & 1304.23 & 1.1000 & 1.0818 & 0.0257 & 1.237 & 1.037 & 99890 & 0.0139 & 69.23 & 0.00 & 0 \\
	HRW(sample K=2M) & 2M & 0.01 & 1715.48 & 1.17 & 1.1810 & 1.1185 & 0.0501 & 0.199 & 0.000 & 3989 & 0.0015 & 7.51 & 0.00 & 0 \\
		CRUSH-like(rack=50,bp=8,lp=8,tries=16) & 50M & 0.00 & 605.60 & 82.61 & 1.0379 & 1.0233 & 0.0100 & 0.200 & 0.000 & 99884 & 0.0004 & 1.92 & 16.02 & 58 \\
			\bottomrule
			\end{tabular}
			\end{adjustbox}
\end{center}

\par\vspace{10pt}
\begin{center}
\scriptsize
\setlength{\tabcolsep}{3.2pt}
\captionof{table}{Detailed Results (failed\_nodes=50), averaged over 5 repeats (K=50M).}
\label{tab:fail50}
\begin{adjustbox}{max width=\textwidth}
\begin{tabular}{@{}l r r r r r r r r r r r r r r@{}}
\toprule
Algorithm & K & Build & Query & Thrpt & Max/Avg & P99/Avg & CV & Churn\% & Excess\% & FailAff & MaxRecv & Conc & ScanAvg & ScanMax \\
\midrule
Ring(vn=256)[rebuild] & 50M & 62.34 & 1296.59 & 38.57 & 1.2785 & 1.1550 & 0.0639 & 1.004 & 0.000 & 501760 & 0.0014 & 6.71 & 0.00 & 0 \\
Ring(vn=256)[next-alive] & 50M & 0.00 & 735.85 & 67.95 & 1.2785 & 1.1550 & 0.0639 & 1.004 & 0.000 & 501760 & 0.0014 & 6.71 & 1.01 & 4 \\
MPCH(ring,vn=256,P=8)[next-alive] & 50M & 27.55 & 5668.07 & 8.82 & 1.0697 & 1.0439 & 0.0192 & 1.001 & 0.000 & 500370 & 0.0004 & 2.15 & 8.04 & 13 \\
\textbf{LRH(vn=256,C=8)[fixed-cand]} & \textbf{50M} & \textbf{30.06} & \textbf{845.75} & \textbf{59.16} & \textbf{1.0947} & \textbf{1.0574} & \textbf{0.0244} & \textbf{1.000} & \textbf{0.000} & \textbf{500198} & \textbf{0.0004} & \textbf{1.90} & \textbf{8.00} & \textbf{8} \\
LRH(vn=256,C=8)[rebuild] & 50M & 65.11 & 1508.54 & 33.20 & 1.0947 & 1.0574 & 0.0244 & 1.766 & 0.765 & 500198 & 0.0004 & 1.93 & 8.00 & 8 \\
Jump[rebuild-buckets] & 50M & 0.00 & 191.22 & 261.92 & 1.0361 & 1.0232 & 0.0100 & 98.656 & 97.657 & 499604 & 0.0205 & 101.30 & 0.00 & 0 \\
	Maglev(M=65537)[rebuild] & 50M & 2.93 & 39.79 & 1261.32 & 1.1000 & 1.0818 & 0.0257 & 4.511 & 3.513 & 498759 & 0.0041 & 20.50 & 0.00 & 0 \\
	HRW(sample K=2M) & 2M & 0.01 & 1712.20 & 1.17 & 1.1810 & 1.1185 & 0.0501 & 0.997 & 0.000 & 19935 & 0.0007 & 3.28 & 0.00 & 0 \\
		CRUSH-like(rack=50,bp=8,lp=8,tries=16) & 50M & 0.00 & 605.71 & 82.60 & 1.0379 & 1.0233 & 0.0100 & 0.999 & 0.000 & 499552 & 0.0003 & 1.41 & 16.08 & 67 \\
			\bottomrule
			\end{tabular}
			\end{adjustbox}
\end{center}
\twocolumn

\section*{Appendix D: Critical Modeling Boundaries and More Rigorous Derivations}
\label{app:rigor}

This appendix refines the main-text scaling arguments by making explicit:
(i) randomness of the candidate-coverage count $d_n$ (compound variance),
(ii) discrete key sampling noise vs.\ structural imbalance,
(iii) locality-induced correlations, and
(iv) the impact of rack-correlated failures.
The goal is to \emph{bound} the terms omitted by simplified derivations and clarify when they are negligible.

\subsection*{D.1\quad Setup and the structural smoothing identity}

Let $m=NV$ be the number of ring tokens and let $G_1,\dots,G_m$ be the cyclic gaps.
For each gap $i$, let $S_i$ be the $C$ distinct \LRH{} candidates.
Let $L_n$ denote node $n$'s \emph{fluid} load share.

\begin{lemma}[Uniform HRW winner within a set]
\label{lem:uniform-winner-app}
Fix a key $k$ and a candidate set $S$ with $|S|=C$.
If $\{s(k,n)\}_{n\in S}$ are i.i.d.\ continuous random variables, then each $n\in S$ wins with probability $1/C$.
\end{lemma}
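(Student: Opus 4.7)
The plan is a short exchangeability argument; the statement is essentially the classical fact that among i.i.d.\ continuous random variables the index of the maximum is uniformly distributed, so the only real content is the tie-breaking caveat.

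First I would rule out ties. Continuity of the common score distribution gives $\Pr[s(k,n)=s(k,n')]=0$ for any $n\ne n'\in S$, and a union bound over the $\binom{C}{2}$ unordered pairs shows the maximum is almost surely unique. Consequently the win-events $E_n := \{\arg\max_{x\in S}s(k,x)=n\}$ for $n\in S$ are pairwise disjoint and partition a probability-one set, so $\sum_{n\in S}\Pr[E_n]=1$.

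Next I would invoke exchangeability. Because $(s(k,n))_{n\in S}$ is i.i.d., its joint law is invariant under every permutation of $S$; in particular, the transposition swapping two indices $n, n'$ is a measure-preserving bijection on the sample space that carries $E_n$ onto $E_{n'}$, so $\Pr[E_n]=\Pr[E_{n'}]$ for every pair. Equal probability across all $C$ candidates, combined with the partition property, then forces $\Pr[E_n]=1/C$.

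The main obstacle is essentially nonexistent; the only thing worth flagging is that continuity is load-bearing, since discrete scores under a deterministic tie-break could concentrate winning mass on, say, the lexicographically smallest candidate. In the \LRH{} implementation the scores come from a $b$-bit keyed hash, so collision probability is at most $\binom{C}{2}\cdot 2^{-b}$; this is negligible for practical hash widths and can be absorbed into the secondary-effect terms quantified elsewhere in Appendix~D. Since this lemma is applied pointwise across every gap to derive the smoothing identity \eqref{eq:lrh-load-linear-main}, confirming the tie-free regime here is precisely what justifies the gap-by-gap decomposition downstream.
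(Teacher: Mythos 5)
Your argument is correct and is the same exchangeability argument the paper gives (its proof of Lemma~\ref{lem:uniform-winner-main} is a one-liner invoking exchangeability); you simply spell out the tie-exclusion step and the partition-sums-to-one step explicitly. No gap.
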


\begin{lemma}[Local averaging (load linearization)]
\label{lem:load-linear-app}
Under Lemma~\ref{lem:uniform-winner-app}, the fluid load satisfies
\[
L_n \;=\; \frac{1}{C}\sum_{i:\, n\in S_i} G_i.
\]
\end{lemma}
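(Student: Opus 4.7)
The plan is to decompose the unit circle by gaps, apply Lemma~\ref{lem:uniform-winner-app} within each gap, and then re-aggregate the contributions by node. Concretely, I would first observe that all keys hashing into gap $i$ share the same \textsc{LowerBound} index, so by construction of the lookup procedure they share the same candidate set $S_i$ and the same (deterministic) offset walk. Writing the fluid load as a measure,
\[
L_n \;=\; \sum_{i=1}^{m} \mu\bigl(\{\,x \in \text{gap }i : \text{winner}(x) = n\,\}\bigr),
\]
where $\mu$ is Lebesgue measure on $[0,1)$ under the position-hash pushforward, splits the problem into $m$ independent per-gap contributions.

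Next I would handle the two cases. If $n \notin S_i$, the per-gap contribution is trivially zero, since $n$ is not even in the election. If $n \in S_i$, I would invoke Lemma~\ref{lem:uniform-winner-app}: under the score-symmetry assumption, the HRW winner inside $S_i$ is uniformly distributed over the $C$ candidates, independent of the position hash. Taking expectation over the independent score hash, the fluid mass of gap $i$ that lands on $n$ equals $\tfrac{1}{C}G_i$. Summing over all gaps where $n$ appears gives the claimed identity.

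The small subtlety (which I would make explicit rather than gloss over) is what ``fluid'' means operationally: it is the expected mass under the HRW score randomness, with the token positions and gap-to-candidate mapping held fixed. Once that is stated, the derivation is just linearity of expectation applied to the per-gap indicator $\mathbf{1}\{\text{winner}=n\}$. I would also remark that distinctness of candidates (guaranteed by the next-distinct offsets) is what makes the $1/C$ factor clean: without it, a single physical node could appear multiple times in $S_i$ and the uniform-winner lemma would need to be replaced by a multiplicity-weighted version.

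The main obstacle is not mathematical but expository: being precise about the probability space so that Lemma~\ref{lem:uniform-winner-app} applies gap-by-gap and aggregates correctly. In particular, I would stress that the score hash is independent across the key's position, so conditioning on ``key falls in gap $i$'' does not bias the within-set winner distribution; this is what lets expectation pass inside the sum over gaps and yields the stated structural smoothing identity.
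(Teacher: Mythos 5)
Your proof is correct and takes essentially the same route the paper uses to derive the analogous main-text identity (the paper states the appendix lemma without a standalone proof): decompose the circle by gaps, observe that every key in gap $i$ shares the candidate set $S_i$, apply the uniform-winner lemma so that each $G_i$ splits evenly among its $C$ candidates, and sum over the gaps containing $n$. Your explicit remarks on what ``fluid'' means operationally (expectation over the score-hash randomness with positions held fixed) and on why candidate distinctness is what makes the clean $1/C$ factor valid are useful clarifications the paper leaves tacit.
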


\subsection*{D.2\quad Dirichlet gaps and the missing ``compound variance'' term}

Assume token positions are i.i.d.\ uniform on $[0,1)$ and then sorted. Then
\begin{equation}
\label{eq:dirichlet}
(G_1,\dots,G_m)\sim\mathrm{Dirichlet}(1,\dots,1).
\end{equation}

Let $I_n:=\{i: n\in S_i\}$ and $d_n:=|I_n|$.
A subtlety is that $d_n$ is generally \emph{random} (depending on local token layout and the ``distinct'' constraint),
so $L_n$ is a mixture over $d_n$.
We now make the variance decomposition explicit.

\begin{proposition}[Conditional Beta and total variance decomposition]
\label{prop:compound-var}
Under \eqref{eq:dirichlet}, conditional on $I_n$ (equivalently on $d_n$),
\[
\begin{aligned}
\sum_{i\in I_n} G_i \,\big|\, I_n &\sim \mathrm{Beta}(d_n,\, m-d_n),\\
L_n \,\big|\, d_n &\sim \frac{1}{C}\,\mathrm{Beta}(d_n,\, m-d_n).
\end{aligned}
\]
Moreover,
\begin{equation}
\label{eq:total-variance}
\mathrm{Var}(L_n)
=
\mathbb{E}\!\left[\mathrm{Var}(L_n\mid d_n)\right]
+
\mathrm{Var}\!\left(\mathbb{E}[L_n\mid d_n]\right).
\end{equation}
\end{proposition}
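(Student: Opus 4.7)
The plan is to split the proposition into two essentially independent pieces. The conditional Beta identity reduces to the aggregation property of the symmetric Dirichlet distribution once a specific independence between $I_n$ and the gap vector is established, while the total-variance decomposition is the standard conditional-variance identity and requires no ring-specific argument. Most of the work sits in the first piece.

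First I will make the probabilistic model explicit: the $m=NV$ vnode positions are i.i.d.\ uniform on $[0,1)$, and each vnode carries a pre-assigned node label drawn independently of its position. After sorting the tokens by position, the sequence of labels at sorted ranks becomes a random permutation $\pi$ of the label multiset that is independent of the sorted gap vector $(G_1,\dots,G_m)$. Since the candidate sets $S_i$ are produced by the next-distinct walk starting at rank $i$, they are measurable functions of $\pi$ alone, and therefore so is $I_n=\{i:n\in S_i\}$. Hence $I_n$ is independent of $(G_1,\dots,G_m)$. The standard aggregation property of the Dirichlet then gives, for any fixed $A\subseteq\{1,\dots,m\}$ with $|A|=d$, $\sum_{i\in A}G_i\sim\mathrm{Beta}(d,m-d)$. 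Conditioning on $I_n=A$ does not change this marginal, so $\sum_{i\in I_n}G_i\mid I_n\sim\mathrm{Beta}(d_n,m-d_n)$, with dependence on $I_n$ only through $d_n$. Multiplying by $1/C$ via the load linearization identity (Lemma~\ref{lem:load-linear-app}) yields $L_n\mid d_n\sim\tfrac{1}{C}\mathrm{Beta}(d_n,m-d_n)$, at which point the conditional mean $d_n/(Cm)$ and conditional variance $d_n(m-d_n)/[C^2 m^2(m+1)]$ follow from the standard Beta moments.

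For the second piece I will invoke the law of total variance, $\mathrm{Var}(L_n)=\mathbb{E}[\mathrm{Var}(L_n\mid d_n)]+\mathrm{Var}(\mathbb{E}[L_n\mid d_n])$, which holds whenever $(L_n,d_n)$ have a joint law with finite second moment; this is automatic because $L_n\in[0,1]$. The hard part will be the independence step: if positions and labels shared a common hash seed (so that the sorted label sequence were correlated with the gap lengths), the Beta identity would only hold approximately and an additional correction term would appear in \eqref{eq:total-variance}. I will handle this by stating the independent-label assumption up front as part of the randomized-ring regime, and by noting a minor edge case (assuming $N\ge C$ so that every next-distinct walk returns exactly $C$ labels) that has to be dispatched for the candidate sets to be well-defined.
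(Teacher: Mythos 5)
Your proof is correct, and it takes the same (implicit) route the paper has in mind: the paper asserts Proposition~\ref{prop:compound-var} without proof, relying tacitly on Dirichlet aggregation and the law of total variance. What you add, and what the paper glosses over, is the one genuinely non-trivial step: why conditioning on $I_n$ leaves the gap vector's Dirichlet law unchanged. Your argument is the right one: for i.i.d.\ continuous positions, the rank permutation is independent of the order statistics, so the label sequence at sorted ranks (and hence each $S_i$ and $I_n$, being deterministic functions of that sequence under the next\nbhy distinct rule) is independent of $(G_1,\dots,G_m)$. That independence is exactly what allows you to replace the conditional law of $\sum_{i\in I_n} G_i$ given $I_n=A$ by the unconditional $\mathrm{Beta}(|A|,m-|A|)$, and you correctly note that the result depends on $I_n$ only through $d_n$, licensing the move from conditioning on $I_n$ to conditioning on $d_n$. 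The second half is indeed just the law of total variance, automatic since $L_n\in[0,1]$. Your caveats are apt: the independence fails if label assignment and token placement share randomness, and $N\ge C$ is needed for the next\nbhy distinct walk to terminate with $C$ distinct labels. In short, your write\nbhy up is a careful expansion of what the paper takes for granted, and the independence lemma is the piece most worth adding to the appendix.
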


\paragraph{Why the second term exists.}
Because $\mathbb{E}[\,\mathrm{Beta}(d,m-d)\,]=d/m$, we have
\[
\mathbb{E}[L_n\mid d_n]=\frac{d_n}{C m}.
\]
Therefore the ``missing'' compound term is
\begin{equation}
\label{eq:missing-term}
\mathrm{Var}\!\left(\mathbb{E}[L_n\mid d_n]\right)
=
\frac{\mathrm{Var}(d_n)}{C^2 m^2}.
\end{equation}

\subsection*{D.3\quad Bounding $\mathrm{Var}(d_n)$ when $C\ll N$}

The distinct-candidate rule implies $d_n$ can drop below $VC$ if multiple tokens of the same node
fall within the same ``$C$-distinct'' neighborhood, causing overlapping coverage intervals.
We do not attempt an exact law for $d_n$; instead we give a conservative bound.

\paragraph{A simple collision proxy.}
For each token occurrence of node $n$, consider the event that node $n$ appears again
before the scan collects $C$ \emph{distinct} nodes ahead.
Let $Y_t\in\{0,1\}$ be the indicator for the $t$-th token of node $n$ that such a ``self-collision'' occurs.
Define $Y:=\sum_{t=1}^V Y_t$.
A self-collision implies that the coverage contribution of the colliding token partially overlaps with its predecessor.
Conservatively, each such event reduces the ideal coverage count $VC$ by at most $C$. Thus,
\begin{equation}
\label{eq:deficit-bound}
0 \le VC-d_n \le C\,Y.
\end{equation}

\begin{lemma}[Self-collision probability scale]
\label{lem:collision-scale}
Under uniform random token placement and $C\ll N$, one has
\[
\mathbb{E}[Y_t] \;\lesssim\; \frac{C}{N},
\qquad\text{hence}\qquad
\mathbb{E}[Y] \;\lesssim\; \frac{VC}{N}.
\]
\end{lemma}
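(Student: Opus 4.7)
The plan is to reduce the forward scan to an i.i.d.\ sampling problem, bound the expected scan length by a coupon-collector-style argument, and then combine these with a Wald-type identity to obtain the per-token bound; linearity of expectation over $t=1,\dots,V$ then delivers $\mathbb{E}[Y]\lesssim VC/N$.

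First I would reduce to an i.i.d.\ label model. Under uniform random placement of $m=NV$ tokens with $C\ll N$, the node labels encountered by a forward walk from any anchor are actually a without-replacement sample from a balanced multiset with exactly $V$ copies of each of the $N$ nodes; conditioning on the anchor carrying label $n$ further depletes $n$'s remaining copies from $V$ to $V-1$. Both effects only \emph{decrease} the probability that another token of $n$ appears ahead, so replacing the sequence with i.i.d.\ uniform draws on $\{1,\dots,N\}$ yields a valid upper bound on $\mathbb{E}[Y_t]$. In this i.i.d.\ model, let $L_t$ denote the number of forward tokens examined until $C-1$ labels distinct from the anchor have been collected. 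Standard coupon collector gives $\mathbb{E}[L_t]=(N-1)\sum_{j=N-C+1}^{N-1}j^{-1}=(C-1)\bigl(1+O(C/N)\bigr)$, hence $\mathbb{E}[L_t]\le C+O(C^2/N)$.

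Next I would convert this length bound into a probability bound. Let $R_t$ count the forward tokens within the first $L_t$ whose label equals $n$. Since $L_t$ is a stopping time with respect to the forward label sequence and the labels are i.i.d.\ uniform on $\{1,\dots,N\}$, Wald's identity gives $\mathbb{E}[R_t]=\mathbb{E}[L_t]/N\le C/N+O(C^2/N^2)$. The self-collision event $\{Y_t=1\}$ is contained in $\{R_t\ge 1\}$, so the trivial first-moment bound $\Pr[R_t\ge 1]\le\mathbb{E}[R_t]$ yields $\mathbb{E}[Y_t]=\Pr[Y_t=1]\lesssim C/N$. Summing over the $V$ tokens of node $n$ by linearity of expectation delivers $\mathbb{E}[Y]\lesssim VC/N$, as claimed.

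The main obstacle is the rigorous justification of the i.i.d.\ reduction: the true label process is sampling without replacement from a balanced multiset, and the anchor conditioning induces a small but nonzero bias. The cleanest fix is a monotone coupling showing that the without-replacement self-collision count is stochastically dominated by its i.i.d.\ analogue, which follows from the negative association of the indicators ``forward token $j$ has label $n$''. Alternatively, one can compute $\mathbb{E}[R_t]$ exactly in the multiset model as $\tfrac{V-1}{NV-1}\,\mathbb{E}[L_t]\le C/N$, bypassing the coupling entirely. Either route delivers the $\lesssim C/N$ scaling, and neither affects the final $VC/N$ constant beyond the $O(C^2/N)$ corrections already absorbed by the $\lesssim$ notation.
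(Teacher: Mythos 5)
The paper states this lemma without proof, so I am evaluating your argument on its own.

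The i.i.d.\ skeleton is correct. Let $L_t$ be the number of i.i.d.\ uniform draws from $\{1,\dots,N\}$ until $C-1$ labels distinct from $n$ have been collected; then $\mathbb{E}[L_t]=N\sum_{j=1}^{C-1}(N-j)^{-1}=(C-1)\bigl(1+O(C/N)\bigr)$, $L_t$ is a finite-mean stopping time for the label filtration, Wald gives $\mathbb{E}[R_t]=\mathbb{E}[L_t]/N\lesssim C/N$, and $\Pr[Y_t=1]\le\mathbb{E}[R_t]$ by the first-moment method. Linearity of expectation over the $V$ exchangeable tokens of node $n$ then gives $\mathbb{E}[Y]\lesssim VC/N$.

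Two parts of the reduction to this model need repair. \emph{(i) The ``exact'' identity in the multiset model is false.} Wald's identity does not hold for sampling without replacement with a history-dependent stopping time. Take $N=2$, $V=2$, $C=2$: after removing the anchor there remain one $n$-token and two $m$-tokens, and enumerating the six forward orders gives $\mathbb{E}[L_t]=4/3$ and $\mathbb{E}[R_t]=1/3$, while $\tfrac{V-1}{NV-1}\,\mathbb{E}[L_t]=\tfrac{4}{9}$. The formula overshoots; that happens to be the safe direction here, but it is not the exact value, and you have not argued it always overshoots. \emph{(ii) Negative association points the wrong way.} For NA indicators with a common marginal $p$ one has $\Pr[\text{all zero in first }k]\le(1-p)^k$, i.e.\ a \emph{higher} probability of at least one hit than in the matched-marginal i.i.d.\ model, so ``NA of the forward indicators'' cannot by itself deliver $Y_t^{\mathrm{WR}}\preceq Y_t^{\mathrm{iid}}$. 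What actually works is that anchor conditioning depresses the per-step hit probability: after $j$ non-$n$ draws it equals $(V-1)/(NV-1-j)$, which is $<1/N$ for all $j\le N-2$, so the chain-rule product $\prod_{j<k}\tfrac{NV-V-j}{NV-1-j}$ exceeds $(1-1/N)^k$ term by term; pairing this with the observation that the without-replacement process accumulates distinct non-$n$ labels at least as fast as the i.i.d.\ one (new-label probability $(N-1-d)V/(NV-1-j)\ge(N-1-d)/N$), a step-by-step coupling yields the needed domination on the range $j\lesssim N$ that the scan actually uses. That is the coupling to write down explicitly; it is a two-sided argument (fewer hits per step \emph{and} a scan that terminates no later), not a one-line NA consequence.
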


\paragraph{Variance bound (rare-collision regime).}
Lemma~\ref{lem:collision-scale} gives $\mathbb{E}[Y]\lesssim VC/N$.
Moreover, in the same randomized-ring regime collisions are rare and local; consequently $Y$ behaves like a
Poisson-binomial count and concentrates at its mean. A conservative second-moment bound yields
\[
\mathrm{Var}(Y)=O(\mathbb{E}[Y])=O\!\left(\frac{VC}{N}\right).
\]
This holds, for example, when $VC/N=O(1)$.
Plugging into \eqref{eq:deficit-bound} gives:
\begin{equation}
\label{eq:var-dn-bound}
\mathrm{Var}(d_n)
\le
C^2 \mathrm{Var}(Y)
\;\lesssim\;
\frac{V C^3}{N}.
\end{equation}

\paragraph{Impact on $\mathrm{Var}(L_n)$.}
Plugging \eqref{eq:var-dn-bound} into \eqref{eq:missing-term} with $m=NV$ yields
\[
\mathrm{Var}\!\left(\mathbb{E}[L_n\mid d_n]\right)
=\frac{\mathrm{Var}(d_n)}{C^2 m^2}
\;\lesssim\;
\frac{VC^3/N}{C^2 (N^2V^2)}
=\frac{C}{N^3V}.
\]
Compared to the leading structural term
$\mathbb{E}[\mathrm{Var}(L_n\mid d_n)]\approx \mathbb{E}[d_n]/(C^2 m^2)\approx 1/(N^2VC)$,
the ratio is on the order of $C^2/N$.
Equivalently, since both terms share the denominator $C^2m^2$, the compound/structural ratio is
$\mathrm{Var}(d_n)/\mathbb{E}[d_n]=O(C^2/N)$ under \eqref{eq:var-dn-bound}.
Thus, for large clusters ($C^2 \ll N$), the randomness of $d_n$ is a negligible second-order effect.

\subsection*{D.4\quad Fluid load vs.\ discrete keys}

The main text analyzes \emph{structural} imbalance ($L_n$).
Real systems have $K$ discrete keys, introducing sampling noise ($X_n$).
\begin{equation}
\label{eq:key-var-decomp}
\begin{aligned}
\mathrm{Var}(X_n)
&=
K\,\mathbb{E}[L_n(1-L_n)] \quad\text{(sampling)}\\
&\quad+\;
K^2\,\mathrm{Var}(L_n) \quad\text{(structural)}.
\end{aligned}
\end{equation}
For large $K$ per node ($K/N \gg 1$), the structural term dominates, justifying our focus on smoothing $L_n$.

\subsection*{D.5\quad Locality correlations}

Neighboring nodes have overlapping candidate windows, making $(L_n)$ positively correlated.
This does \emph{not} invalidate union bounds used for PALR (Theorem 5), as $\Pr[\cup A_n] \le \sum \Pr[A_n]$ holds regardless of independence.
Correlation implies that if overload occurs, it may manifest as a local cluster of loaded nodes rather than isolated spikes.

\subsection*{D.6\quad Baseline fairness: statistical equivalence vs.\ cost}

We emphasize that while \LRH{} is statistically equivalent to increasing vnodes to $V'=VC$,
it achieves this without the $C\times$ memory and build-time penalty associated with expanding the ring structure.
It decouples the smoothing parameter ($C$) from the ring state size ($NV$).

\subsection*{D.7\quad Rack-correlated failures}
\label{sec:rack-correlation-app}

Theorem~\ref{thm:pc-main} assumes independent failures.
In practice, failures may be correlated by topology (e.g., a top-of-rack switch failure).
Let the cluster consist of $N$ nodes partitioned into racks of size $R$.
Suppose a full rack fails (batch failure of $R$ nodes).

Because tokens are placed via a hash function that is agnostic to rack ID (Random Hashing),
the relative positions of tokens from a specific rack are random uniform.
Consequently, the set of candidates $S_k$ for any key $k$ (which consists of $C$ logically consecutive distinct nodes)
is effectively a random sample of size $C$ from the population of $N$ nodes, with respect to rack affiliation.

We can thus apply Theorem~\ref{thm:hypergeo-main} (Fixed-$F$ model) with $F=R$:
\[
\Pr[\text{all } C \text{ candidates are in the failed rack}]
\;\approx\;
\left(\frac{R}{N}\right)^C.
\]
For typical values (e.g., $R=40, N=5000, C=8$), this probability is $(0.008)^8 \approx 10^{-17}$, which is negligible.
This result confirms that \LRH{} maintains high availability under rack failures without requiring explicit cross-rack placement constraints, provided standard random hashing is used.

\end{document}